\newcommand{\blind}{1}
\newcommand{\N}{\mathbb{N}}
\newcommand{\R}{\mathbb{R}}
\DeclareMathOperator{\E}{\mathbb{E}}        
\DeclareMathOperator{\var}{var}             
\DeclareMathOperator{\pr}{Pr}                    
\newcommand{\ind}[1]{\text{I}\left({#1}\right)}  
\newcommand{\indsim}{\stackrel{ind}{\sim}}       
\newcommand{\iidsim}{\stackrel{iid}{\sim}}       
\def\Beta{\hbox{Beta}}
\def\Ber{\hbox{Ber}}
\def\CRP{\hbox{CRP}}
\def\Ga{\hbox{Ga}}
\def\InvGa{\hbox{Inv-Ga}}
\def\Normal{\hbox{N}}
\def\PG{\hbox{PG}}
\newtheorem{definition}{Definition}
\newtheorem{proposition}{Proposition}
\newtheorem{proposition*}{Proposition}
\newcommand{\ARI}{\text{ARI}}
\newcommand{\tRPM}{\text{tRPM}}
\newcommand{\smRPM}{\text{smRPM}}
\begin{document}

\def\spacingset#1{\renewcommand{\baselinestretch}%
{#1}\small\normalsize} \spacingset{1}


\if1\blind
{
  \title{\bf  Bayesian local clustering of functional data via semi-Markovian random partitions}
  \author{Giovanni Toto\hspace{.2cm}\\
    Department of Statistics and Data Science, University of Texas at Austin, \\Austin, TX, USA\\
    and \\
    Antonio Canale \\
    Department of Statistics, University of Padova, Padua, Italy}
  \maketitle
} \fi

\if0\blind
{
  \bigskip
  \bigskip
  \bigskip
  \begin{center}
    {\LARGE\bf Bayesian local clustering of functional data via semi-Markovian random partitions}
\end{center}
  \medskip
} \fi

\bigskip
\begin{abstract}
We introduce a Bayesian framework for indirect local clustering of functional data, leveraging B-spline basis expansions and a novel dependent random partition model. By exploiting the local support properties of B-splines, our approach allows partially coincident functional behaviors, achieved when shared basis coefficients span sufficiently contiguous regions. This is accomplished through a cutting-edge dependent random partition model that enforces semi-Markovian dependence across a sequence of partitions. By matching the order of the B-spline basis with the semi-Markovian dependence structure, the proposed model serves as a highly flexible prior, enabling efficient modeling of localized features in functional data. Furthermore, we extend the utility of the dependent random partition model beyond functional data, demonstrating its applicability to a broad class of problems where sequences of dependent partitions are central, and standard Markovian assumptions prove overly restrictive. Empirical illustrations, including analyses of simulated data and tide level measurements from the Venice Lagoon, showcase the effectiveness and versatility of the proposed methodology.

\end{abstract}

\noindent%
{\it Keywords:} Bayesian Nonparametrics, Random Partition Model, Model-based clustering;
\vfill

\newpage
\spacingset{1.9} 

\section{Introduction} \label{sec:intro}
Functional data analysis \citep{ramsay_FunctionalDataAnalysis_2005, morris_FunctionalRegression_2015} 
addresses classical statistical challenges, including regression, classification, and clustering, in contexts where the data are naturally represented as smooth functions---such as curves or surfaces---defined over a continuous domain. For example, functional data clustering \citep[see][for a review]{zhang_ReviewClusteringMethods_2023} involves identifying homogeneous groups within a sample of functional data. 
Clustering in the functional context is inherently more complex than its Euclidean counterpart, due to the unique characteristics of function spaces.
In fact, functional observations often display diverse behaviors across different regions of their domains, making it crucial to clearly define the objectives of the clustering process, whether the aim is to capture the global trends or localized variations within the data.
To illustrate this, consider the toy example reported in Figure~\ref{fig:clustering}. While three distinct groups can be clearly identified when observing the global behavior of the functional data, different behaviors may emerge when focusing on subsets of the domain.
For instance, all functional observations exhibit the same behavior in the bounds of their domain, while multiple behaviors are observed in the center.
In line with this, we define global clustering as the approach where each functional observation is entirely assigned to a single cluster. In contrast, local clustering explores how the partitioning of functional data evolves along their domains, allowing a single functional observation to belong to different clusters depending on the specific point in its domain being evaluated.

\begin{figure}
    \centering
    \includegraphics[width=0.9\linewidth,height=5cm]{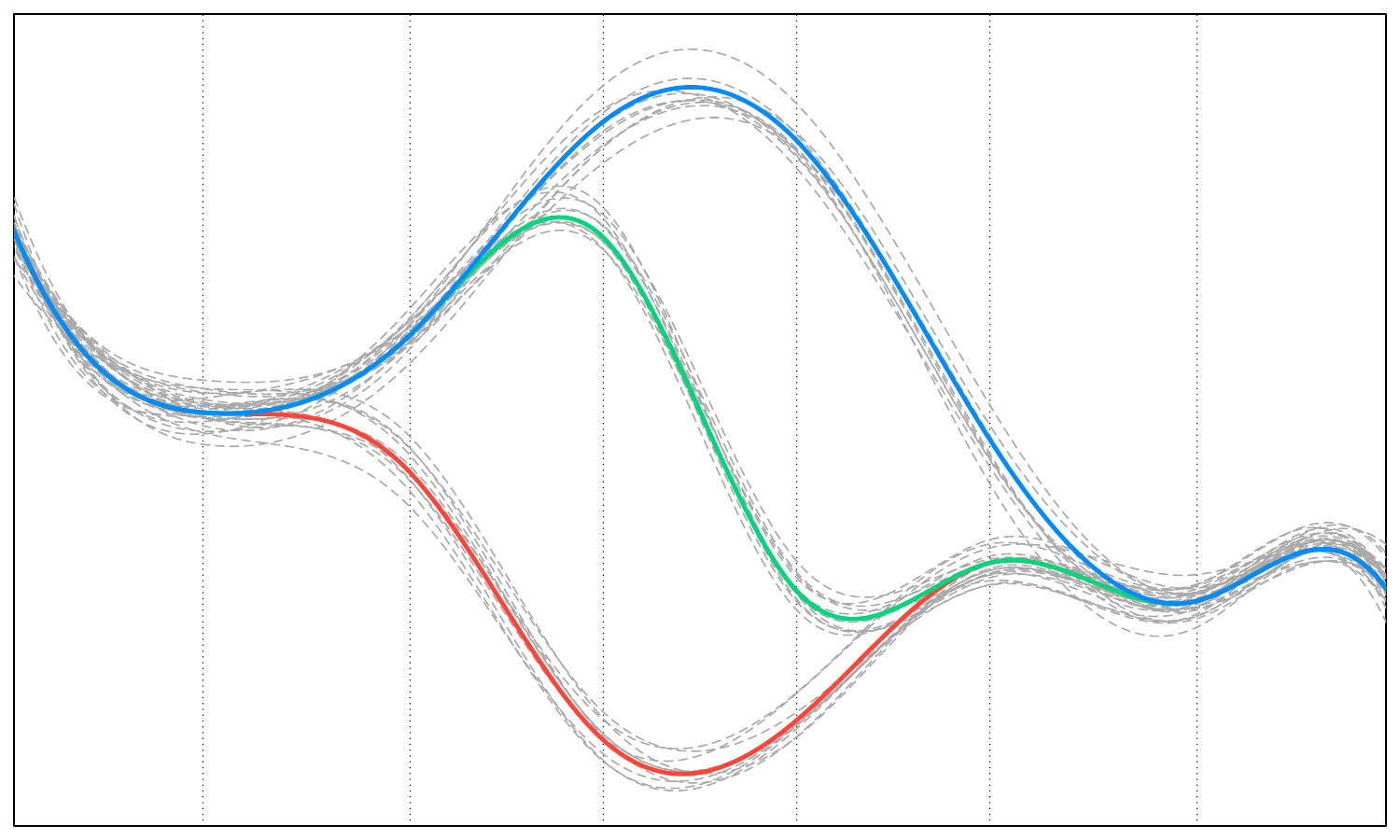}
    \caption{Illustrative example of local-global functional clustering. Gray dashed lines represent functional data. Colored thick lines represent three global clusters that collapse in certain subregions of the domain. }
    \label{fig:clustering}
\end{figure}

Early contributions to this topic, under a Bayesian context, have relied on variations of the Dirichlet Process \citep{ferguson_BayesianAnalysisNonparametric_1973}, such as the hybrid Dirichlet Process \citep{petrone_HybridDirichletMixture_2009} and the nested Hierarchical Dirichlet Process \citep{nguyen_InferenceGlobalClusters_2010}.
\cite{petrone_HybridDirichletMixture_2009} proposed a Bayesian nonparametric mixture model which accounts for global and local heterogeneity by representing individual curves as recombinations of a set of canonical curves.
\cite{nguyen_DirichletLabelingProcess_2011} undertook a detailed analysis of the Dirichlet labeling process introduced in \cite{petrone_HybridDirichletMixture_2009} and proposed a more computationally tractable alternative labeling process.
Later, the same authors \citep{nguyen_BayesianNonparametricModeling_2014} introduced a novel approach to solve the functional ANOVA problem that extends and merges together the ideas in \citet{teh_HierarchicalDirichletProcesses_2006} and \citet{nguyen_InferenceGlobalClusters_2010}.
Recently, \citet{hellmayr_PartitionDirichletProcess_2021} introduced the concepts of marginal and joint labeling processes, used them to adapt the models in \citet{gelfand_BayesianNonparametricSpatial_2005} and \citet{duan_GeneralizedSpatialDirichlet_2007} to functional data, hence defining the Functional Dirichlet process (FDp) and the Generalized Functional Dirichlet process (GFDp), and proposes a middle-ground specification denoted as the Partitioned Functional Dirichlet process (PFDp).

Another stream of  research exploits functional data approximations via basis expansion.
\cite{suarez_BayesianClusteringFunctional_2016} assumed that functional observations are noisy realizations of a signal function that can be approximated by a wavelet basis expansion. 
An \textsl{ad-hoc} prior on the wavelet parameters allows for similar but not exactly identical signal functions, however local clustering can be performed applying any clustering approach to a similarity matrix computed from posterior samples of the wavelet parameters.
To avoid a two-step clustering approach, \citet{paulon_BayesianSemiparametricHidden_2024, fan_BayesianSemiparametricLocal_2024} exploited the local property of B-spline basis expansions by defining basis-specific partitions on the expansion parameters.
Both works induce dependence across partitions letting the cluster allocations evolve according to a hidden Markov Model (HMM). 

As a alternative to the HMM approach, a latent (dependent) random partition modeling framework can be employed, consistent with widely adopted Bayesian model-based clustering proposals. 
Following this approach, the local clustering of functional data is associated with a sequence of random partitions that evolves through the functions' domain.
A series of works build on the idea that a specific Random Partition Model (RPM) is specified for the initial partition, and that auxiliary variables are introduced to guide its evolution over time.
For example,  \cite{page_DependentModelingTemporal_2022} proposed temporal Random Partition Model (tRPM), in which adjacent partitions are linked through unit- and time-specific binary latent auxiliary variables specifying whether units can or cannot be reallocated when moving from a time point to the next one.
\cite{romano_BayesianLocalClustering_2025,romano_DependentStochasticBlock_2025} are two recent pre-prints showing that tRPM and stRPM, its spatial-aware extension, can be successfully used as building blocks in complex hierarchical models.
\citet{liang_BayesianNonparametricApproach_2024} focused on information sharing across different times, and propose a hierarchical temporal dependent functional clustering approach. The partitions are induced by a Hierarchical Dirichlet Process \citep{teh_HierarchicalDirichletProcesses_2006} whose hierarchy allows for shared atoms, i.e., cluster-specific parameters, at different times.
These formulations with unit-specific auxiliary variables provide a high degree of flexibility, in the sense adjacent partitions may be very similar or very dissimilar depending on the values assumed by the latent variables, however this may also hinder the ability of the model to detect changepoints.
To this end, \citet{giampino_LocalLevelDynamic_2024} proposed an alternative formulation in which a single auxiliary variable is introduced at each time point that specifies whether the entire partition can change or not.
Compared to HMMs, these approaches offer significantly greater flexibility such as flexible inclusion of prior knowledge, integration of covariate dependencies, ability to specify an alternative RPM for the initial partition, and adaptability through modifications of the auxiliary variables’ distribution.

In the broader context of dependent random partitions, auxiliary variables are used in the informed partition model proposed by \cite{paganin_InformedPartitionModels_2025} to link the partitions explored by the model to an “initial” one encoding prior knowledge. 
Among the literature focusing on linking random partitions to a reference or baseline partition, we recall the pioneering Location–Scale Partition \citep[LSP;][]{smith_DemandModelsRandom_2020} and the Centered Partition Process \citep[CPP;][]{paganin_CenteredPartitionProcesses_2021}, as well as the more recent Shrinkage Partition (SP) distribution \citep{dahl_DependentRandomPartitions_2025}.

In this article, we present a novel Bayesian hierarchical framework for performing local clustering of functional data, leveraging a novel process for random partition modeling. Our approach is motivated by the local properties of B-spline basis representations, which serve as the foundation of our methodology. The latter, is coupled with a generalization of the tRPM, which we term semi-Markovian random partition model (smRPM), in which the dependence structure matches  the order of the B-spline basis.   

Notably, although  motivated by the local functional clustering problem, the smRPM represents a substantial advancement with far-reaching implications for the Bayesian Nonparametrics (BNP) literature on dependent random partitions.
By relaxing the  assumptions of Markovian dependence between partitions and independence among the auxiliary variables driving their evolution, the smRPM offers a flexible modeling alternative with broad potential impact extending beyond the motivating functional data context.

The remainder of the article is organized as follows:
Section~\ref{sec:modelfunctionaldata} introduces the idea of indirect local clustering of functional data through basis-specific partitions in a basis expansion paving the way to a novel class of random partition models with higher-order dependence.  This is presented in Section~\ref{sec:smRPM} which also presents results enabling posterior inference via Gibbs sampling.
Simulations studies assessing the behavior of the proposed smRPM are presented in Section~\ref{sec:simulations}.
Section~\ref{sec:venice} provides an illustrative application of local clustering for functional data, focusing on the analysis of tide-level measurements in the Venice lagoon over time.  Section \ref{sec:conclusion} offers concluding remarks.

\section{Local clustering of functions via basis expansion}  
\label{sec:modelfunctionaldata}

Let $y_i=(y_i(x), x\in \mathcal{D})$, $i=1,\ldots,n$, be random curves defined on the domain $\mathcal{D}\subseteq\R$.
To represent these curves, we assume
\begin{equation} \label{eq:basis_expansion}
    y_i(x) = \mathbf{b}(x)^\top\bm{\theta}_i = \sum_{k=1}^K b_k(x) \theta_{ik},
\end{equation}
where $\bm{\theta}_i=(\theta_{i1},\ldots,\theta_{iK})^\top\in\R^{K}$ is a curve-specific basis coefficient vector, and $\mathbf{b}(\cdot)$ is a $d$-degree B-spline basis with $K$ basis, associated to $K-d+1$ equispaced distinct knots. It is well known that the number of knots (or, equivalently, the number of basis functions) plays a crucial role in determining the flexibility and smoothness of the resulting functions. In our motivating application to functional data clustering, this choice is particularly relevant, as it directly influences the scale at which local similarities between curves can be identified, as discussed later.
 
A common strategy for clustering functions represented through basis expansions involves clustering the associated basis coefficients. Under a Bayesian mixture model formulation, one specifies
\begin{equation} \label{eq:mixture1}
    \bm{\theta}_i \sim P, \quad P = \sum_{j\geq1 } \pi_j \delta_{\bm{\theta}_j^*}
\end{equation} 
where $P$ is a finite or infinite-dimensional random mixing measure with random weights $\{\pi_j\}_{j\geq1}$ and random atoms $\{\bm{\theta}_j^*\}_{j\geq1}$; $\delta_{\bm{\theta}_j^*}$ denotes a Dirac measure on $\bm{\theta}_j^*$. Typical specifications assume independently and identically distributed vectors of basis coefficients from a common base measure $P_0$. For example, under a B-spline specification, a suitable choice for the base measure $P_0$ is the penalized prior introduced by \citet{lang_BayesianPSplines_2004}, which penalizes the differences between consecutive basis coefficients to ensure the resulting curves are smooth. 
Consistent with this, two curves $y_i$ and $y_{i'}$ are considered to belong to the same cluster if every element of $\bm{\theta}_i$ matches its counterpart in $\bm{\theta}_{i'}$.
This scenario has a non-zero prior probability, as the mixture representation in \eqref{eq:mixture1} admits ties in the $\bm{\theta}_i$ due to the almost sure discreteness nature of $P$.
Consequently, one typically expresses the relationship as $\bm{\theta}_{i} = \bm{\theta}_{c_i}^*$, where $c_i \in \{1,2, \dots\}$ is a global label identifying the cluster to which unit $i$ belongs.

In our motivating local clustering problem, however, we generalize the global $c_i$ with a vector $\bm{c}_i= (c_{i1},\ldots,c_{iK})^\top$  of local cluster labels. Consistently, let   $\mathbf{C}$ be the $n\times K$ matrix containing cluster membership labels for each basis, defined as
\begin{equation} \label{eq:cluster_matrix}
    \mathbf{C} =
    \begin{bmatrix}
    \mathbf{c}_{1} \\
    \vdots \\
    \mathbf{c}_{n} \\
    \end{bmatrix} =
    \begin{bmatrix}
    c_{11} & \ldots & c_{1K} \\
    \vdots & c_{ik} & \vdots \\
    c_{n1} & \ldots & c_{nK} \\
    \end{bmatrix}.
\end{equation}
Given $\mathbf{c}_{i}$, $\bm{\theta}_i =(\theta^*_{1c_{i1}},\ldots,\theta^*_{Kc_{iK}})^\top$, where $\theta^*_{kj}$ is the B-spline basis coefficient of cluster $j$ at basis $k$. 

Three aspects are crucial to satisfying the desired local clustering property for functional data expressed through \eqref{eq:basis_expansion}: i) for each $i$, there should be a dependence between contiguous $c_{ik}$, ii) this dependence should result in actual functional clustering rather than merely clustering the basis coefficients, and iii) the resulting curves should exhibit sufficient regularity in terms of smoothness. Issues i) and ii) are addressed through tailored random partition models defining the latent variables in $\mathbf{C}$, while issue iii) is resolved by appropriately defining the prior process generating the atoms $\theta_{kj}^*$.

While any dependent random partition model, including the tRPM of \citet{page_DependentModelingTemporal_2022}, may address issue i), issue ii) requires specific justification based on the characteristics of the basis expansion employed. Notably, an important property of $d$-degree B-splines is that a function is fully determined within the $k$th subinterval $\mathcal{D}_k \in \mathcal{D}$, where the subintervals define the partition of $\mathcal{D}$ induced by the B-spline knots, by its $d+1$ contiguous parameters \( \theta_{i,k}, \ldots, \theta_{i,k+d} \). Therefore, for two curves to be \textsl{locally} identical, it suffices that they share $d+1$ contiguous parameters. Consequently, it becomes clear that the tRPM of \citet{page_DependentModelingTemporal_2022}, which ensures continuity in cluster membership across transitions between adjacent clusters, is insufficient to match the order of the B-spline basis. Motivated by this limitation, we propose a more flexible semi-Markovian random partition model (smRPM) in Section~\ref{sec:smRPM}. It is important to note that, as previously discussed, the choice of $K$ determines the size of the subintervals $\mathcal{D}_k$, the latter being equal to the total length of $\cal D$ divided by $(K-d)$, which in turn sets the minimum scale at which two or more curves can overlap. In the following, we assume that $K$ is fixed. Such choice should be done on a case-by-case basis, guided by pragmatic considerations specific to each application.

Before constructing the proposed smRPM, we briefly discuss how we  propose to address point iii). The prior distribution for the $\theta^*_{kj}$ is defined conditionally on the local clusters $\mathbf{C}$. As in \citet{lang_BayesianPSplines_2004}, we enforce contiguous $\theta^*_{kj}$ to be similar to prevent erratic curve behavior. However, under our local clustering framework, the notion of contiguity extends beyond simple adjacency because different clusters at basis $k-1$ may merge into a single cluster at basis $k$. Accordingly, the general $\theta^*_{kj}$ should remain close to the previous $\theta^*_{k-1,l}$ for each cluster $l$ that is absorbed into cluster $j$ at basis $k$.

A relatively simple approach  to manage this scenario is to assume that the prior expectation of the B-spline parameter for cluster  $j$ at basis $k$, i.e., $E[\theta^*_{kj}]$, is defined as the average of the cluster parameters at basis $k-1$ associated with clusters that are reallocated to cluster $j$ at least once when transitioning from basis $k-1$ to $k$. A specification grounded in this intuition is given by:
\begin{equation} \label{eq:model_theta}
\begin{split}
\theta^*_{1j} \mid \tau^2 &\indsim \Normal(0, \tau^2), \quad j=1,\ldots, J_1,\\
    \theta^*_{kj} \mid \bm{\theta}^*_{k-1},\mathbf{C},\phi,\tau^2 &\indsim \Normal\left(\frac{\phi}{|{\cal C}_{k-1}^{(\to j)}|}\sum_{l\in {\cal C}_{k-1}^{(\to j)}}\theta^*_{k-1,l},\tau^2\right), \quad j=1,\ldots, J_k, \quad k=2,\ldots,K,
\end{split}
\end{equation}
where $J_k$ is the number of clusters at basis $k$, and ${\cal C}_{k-1}^{(\to j)}$ denotes the set $\{l\in\{1,\ldots,J_{k-1}\}: \sum_{i=1}^n\ind{c_{i,k-1}=l, c_{i,k}=j}>0\}$ containing the $|{\cal C}_{k-1}^{(\to j)}|$ indices of clusters which are reallocated to cluster $j$ at least one time when moving from basis $k-1$ to $k$. 
The hierarchical Bayesian model is completed specifying the prior distributions for $\phi$ and $\tau^2$, i.e., $\phi \sim N(m_0,s^2_0)$ and $\tau^2 \sim \InvGa(a_\tau,b_\tau)$, where $\InvGa(a,b)$ denotes an inverse Gamma distribution with shape $a>0$ and scale $b>0$. 

Before concluding this section, we define the model for the observed data. In fact,  we typically observe noisy realizations of the random curves at a finite set of coordinates, which may vary across functions, $(x_{i1},\ldots,x_{iT_i})$. Assuming  ${Y}_i(x_{it})$ to be the noisy realization of  curve  $y_i$ at point $x_{it}$ we let
\begin{equation} \label{eq:model_curves}
    Y_i(x_{it}) \mid \bm{\theta}^*,\mathbf{c}_{i},\sigma^2 \indsim \Normal\left( \sum_{k=1}^K b_k(x_{it}) \theta^*_{k,c_{ik}}, \sigma^2 \right).
\end{equation}
Additionally, let $\mathbf{Y}_i(\bm{x}_i)=(y_i(x_{i1}), \ldots, y_i(x_{iT_i}))^\top$, be the collections of the points in which curve $i$ is observed,  for $i=1,\ldots,n.$ An additional prior for $\sigma^2$ is assumed as 
$\sigma^2 \sim \InvGa(a_\sigma,b_\sigma)$.

\section{Semi-Markovian Random Partition Models} \label{sec:smRPM}
We propose a general prior process to model the evolution of latent partitions as a domain index $k$ evolves. This can be generally thought as time or, consistently with our functional data motivation, as the B-spline basis index.

We start by introducing some general notation. Let $\rho_{1},\ldots,\rho_{K}$ represent the sequence of the evolving partitions with  $\rho_k=\{S_{k1},\ldots,S_{kJ_k}\}$ denoting the partition at $k$ domain index of the $n$ objects into $J_k$ clusters, $S_{kj}\subseteq\{1,\ldots,n\}$, $k=1,\ldots,K$. Notably, this is related to the notation introduced in the previous section, and specifically considering that  $c_{ik}=j$ implies $i\in S_{kj}$. 
In what follows, we use $\rho_k$ and $(c_{1k},\ldots,c_{nk})$ interchangeably, e.g., $\pr(\rho_k) = \pr(c_{1k},\ldots,c_{nk})$.

Following ideas first introduced in \citet{page_DependentModelingTemporal_2022}, we  introduce auxiliary variables which explicitly model the evolution of the partitions at different $k=1,\ldots,K$. Instead of modeling the transition from $k$ to the subsequent $k+1$, we introduce a notion of persistency through the partitions of degree $d_\rho\geq1$. Consistent with this, let $\gamma_{ik}$ be the binary latent variable specifying whether the $i$th unit cannot be considered for possible cluster reallocation for the next $d_\rho$ consecutive index transitions, i.e.
\begin{equation} \label{eq:gamma_def}
    \gamma_{ik} =
    \begin{cases}
        1 & \text{if unit $i$ \emph{cannot} be reallocated in the $d_\rho$  consecutive indices after $k-1$} \\
        0 & \text{otherwise}
    \end{cases}.
\end{equation}
Under this setting, $\gamma_{ik}=1$ means that the cluster assigned to unit $i$ at index $k-1$ remains the same for the $d_\rho$ consecutive indices. The $n$ auxiliary variables at index $k$, denoted as $\bm{\gamma}_k=(\gamma_{1k},\ldots,\gamma_{nk})$, influence the partitions at the next $d_\rho$ indices, thus inducing $d_\rho$-order persistency.

By construction, at index $k=1$,  we set $\bm{\gamma}_1=\mathbf{0}_n$, and assume that $\pr(\rho_1)$ is an Exchangeable Partition Probability Function (EPPF).
In this paper, we consider the simple  EPPF induced by a DP, often referred to as a Chinese
Restaurant Process \citep[CRP,][]{pitman_DevelopmentsBlackwellMacqueenURN_1996},
\begin{equation*}
    \pr(\rho_1 \mid M) = \frac{1}{\prod_{i=0}^{n-1}(M+i)} \prod_{j=1}^J \left\{ M\cdot(|S_{1j}|-1)! \right\},
\end{equation*}
where $M>0$ is a concentration parameter.
Under $\rho_1\sim\CRP(M)$, the predictive rule for unit $i$ given the partition $\rho_1$ with the $i$th unit removed, $\rho_1^{(-i)}$, is
\begin{equation*}
    \pr(c_{1i} \mid \rho^{(-i)}_1, M) \propto
    \begin{cases}
        |S^{(-i)}_{1j}|^{(-i)} & \text{if $j=1,\ldots,J_1^{(-i)}$} \\
        M & \text{if $j=J_1^{(-i)}+1$}
    \end{cases},
\end{equation*}
where $S^{(-i)}_{1j}$ is the $j$th cluster of $\rho_1$ with the $i$th unit removed, and $J_1^{(-i)}$ is the number of clusters in $\rho_1^{(-i)}$.
The joint distribution of the partitions and of the auxiliary variables is
\begin{equation} \label{eq:smRPM}
\pr(\bm{\gamma}_1,\rho_1,\ldots,\bm{\gamma}_K,\rho_K) = \pr(\rho_1) \prod_{k=2}^K \pr(\bm{\gamma}_k)\pr(\rho_k \mid \bm{\gamma}_{k-d_\rho+1:k}, \rho_{k-1}),
\end{equation}
where $\bm{\gamma}_{k-d_\rho+1:k}=(\bm{\gamma}_{k-d_\rho+1},\ldots,\bm{\gamma}_k)$ contains the auxiliary variables which influence unit reallocation when moving from index $k-1$ to $k$. Similarly, we  denote the sequence of the elements related to unit $i$ in $\bm{\gamma}_{k-d_\rho+1:k}$ as $\bm{\gamma}_{i,k-d_\rho+1:k}=(\gamma_{i,k-d_\rho+1},\ldots,\gamma_{i,k})^\top$.
We set $\gamma_{ik}=0$ for $k\leq0$: this caveat is to avoid having to provide ad-hoc definitions for random variables with index $k\leq d_\rho$. The idea is that non-existent auxiliary variables do not influence the reallocation of units.

The joint distribution in \eqref{eq:smRPM} is made of two key ingredients, namely the conditional probability of the partition $\pr(\rho_k \mid \bm{\gamma}_{k-d_\rho+1:k}, \rho_{k-1})$ and the probability $\pr(\bm{\gamma}_k)$ which are separately discussed in the following subsections. We will show that the tRPM can be obtained with specific choices of these distributions.

Before going on, it is worth noting that the distribution of $\rho_k$ could incorrectly suggest that the dependence between partitions is Markovian only. This is not the case since the $d_\rho$-order dependence is induced by the auxiliary variables in $\bm{\gamma}_{k-d_\rho+1:k}$.
Each auxiliary variable $\gamma_{ik}$ influences the reallocation of the $i$th unit for the $d_\rho$ consecutive partitions after $k-1$.
Specifically, $\gamma_{ik}=1$ implies that the $i$th unit belongs to the same cluster in all consecutive partitions $\rho_{k-1}, \ldots, \rho_{k+d_\rho-1}$ by locking the reallocation in each transition between those partitions.
This sequence of $d_\rho$ Markovian constraints can be seen as an unique semi-Markovian constrain.

\subsection{Prior distribution for the partitions}
The distribution of the partition at index $k$ depends on the partition at the previous index, $\rho_{k-1}$, and the auxiliary variables in $\bm{\gamma}_{k-d_\rho+1:k}$.
These two quantities determine the partitions that can be observed at index $k$, or, using the terminology introduced in \citet{page_DependentModelingTemporal_2022}, the set of compatible partitions at index $k$.  Specifically, the $i$th unit can be reallocated if all elements in $\bm{\gamma}_{i,k-d_\rho+1:k}$ are equal to 0 and cannot be reallocated if at least one element is equal to 1 as formalized in the following definition. 
\begin{definition} \label{def:comp_part}
    The partition $\rho_{k-1}$ and $\rho_k$ are compatible with respect to $\bm{\gamma}_{k-d_\rho+1:k}$ if $\rho_k$ can be obtained from $\rho_{k-1}$ by reallocating units as indicated by $\bm{\gamma}_{k-d_\rho+1:k}$, that is those units $i$ such that $\max\{\bm{\gamma}_{i,k-d_\rho+1:k}\}=0$ for $i=1,\ldots,n$.
\end{definition}

To check whether $\rho_{k-1}$ and $\rho_k$ are compatible with respect to the vector $\bm{\gamma}_{k-d_\rho+1:k}$, we simply need to verify that the units that cannot be reallocated when moving from point $k-1$ to $k$ have the same cluster assigned in the two partitions.
Formally, let $\mathcal{R}_k = \{i\in\{1,\ldots,n\}: \max\{\bm{\gamma}_{i,k-d_\rho+1:k}\}=1 \}$ be the indices of units that are not reallocated when moving from index $k-1$ to $k$ and denote with $\rho^{\mathcal{R}_k}_k$ the reduced partition at index $k$ containing only the units specified in $\mathcal{R}_k$, then $\rho_{k-1}$ and $\rho_k$ are compatible with respect to $\bm{\gamma}_{k-d_\rho+1:k}$ if and only if $\rho^{\mathcal{R}_k}_{k-1}=\rho^{\mathcal{R}_k}_k$. 

Denoting the set of all partitions with $\mathcal{P}$ and the set of partitions that are compatible with $\rho_{k-1}$ based on $\bm{\gamma}_{k-d_\rho+1:k}$ with $\mathcal{P}_{\mathcal{R}_k} = \{\rho_k\in \mathcal{P}:\rho_{k-1}^{\mathcal{R}_k}=\rho_{k}^{\mathcal{R}_k}\}$, then $\pr(\rho_k\mid \bm{\gamma}_{k-d_\rho+1:k}, \rho_{k-1})$ is a random partition distribution with support in $\mathcal{P}_{\mathcal{R}_k}$ defined as
\begin{equation} \label{eq:rho_prior}
    \pr(\rho_k=\lambda \mid \bm{\gamma}_{k-d_\rho+1:k}, \rho_{k-1}) = \frac{\pr(\rho_k=\lambda)\ind{\lambda\in \mathcal{P}_{\mathcal{R}_k}}}{\sum_{\lambda'\in \mathcal{P}}\pr(\rho_k=\lambda')\ind{\lambda'\in \mathcal{P}_{\mathcal{R}_k}}},
\end{equation}
where $\pr(\rho_k=\lambda)$ is the EPPF of $\rho_1$ evaluated at $\lambda\in P$ and $\ind{A}$ denotes the indicator function.

\subsection{Prior distribution for the auxiliary variables}
While the role of the auxiliary variables in dynamics of the partitions has been carefully examined, the dependencies among these variables have yet to be addressed.
\citet{page_DependentModelingTemporal_2022} assume a simple independence, i.e.,
\begin{equation} \label{eq:gamma_prior_tRPM}
    \gamma_{ik}\mid\alpha_k \indsim \Ber(\alpha_k)
\end{equation}
with $\bm{\alpha}=(\alpha_1,\ldots,\alpha_K)^\top\in[0,1]^K$ acting as temporal dependence parameters. High $\alpha_k$ implies that units are more likely not to be reallocated in all transitions between index $k-1$ and $k+d_\rho-1$.

Although this independence assumption works well for $d_\rho=1$, it is in general debatable. In fact, for $d_\rho>1$, the influence of some $\gamma_{ik}=0$ may be diminished when placed in the broader context of its neighboring $\gamma_{ik}$'s. For instance, suppose $\gamma_{i,k-1}=\gamma_{i,k+1}=1$ and we are interested in the cluster allocation of the $i$th unit in the partitions at index $k$ onward. If $d_\rho>1$, regardless of its value, $\gamma_{ik}$ has no influence. Indeed, the cluster assignment is fixed by the previous and next auxiliary variables $\gamma_{i,k-1}$ and $\gamma_{i,k+1}$. The intuition is that, even if $\gamma_{ik}$ would allow the reallocation of the $i$th unit, this reallocation cannot happen since, by construction, the cluster is fixed due to the previous and next auxiliary variable.

To address both this issue and make the model more flexible, we introduce $d_\gamma$, a second dependence order on the distribution of the auxiliary variables. Specifically, we propose to consider a logistic autoregressive model with  probability proportional to the number of times the $i$th unit cannot be reallocated when moving from a index to the next one in the previous $d_\gamma$ point movements, i.e., 
\begin{equation}
    \label{eq:gamma_prior}
    \gamma_{ik} \mid \bm{\alpha}, \gamma_{i,k-d_\gamma},\ldots,\gamma_{i,k-1} \indsim \Ber\left(\pi(\bm{\alpha}^\top\mathbf{z}_{ik})\right),
\end{equation}
 where $\pi(x) = \exp(x)/\{1+\exp(x)\}$ is the logistic function and $\mathbf{z}_{ik}$ is row $i+n(k-1)$ of a matrix $\mathbf{Z}$ with $\mathbf{z}_{ik}=(1,\sum_{q=1}^{d_\gamma}\gamma_{i,k-q})^\top\in\R^2$, and $\bm{\alpha}\in \R^2$. 
We assumed here that all the auxiliary variables $\gamma_{ik}$ have the same lag-specific influence on $\gamma_{i,k+1},\ldots,\gamma_{i,k+d_\gamma}$.

We term the model induced by \eqref{eq:smRPM} equipped with \eqref{eq:rho_prior} and \eqref{eq:gamma_prior_tRPM} or \eqref{eq:gamma_prior} as semi-Markovian Random Partition Model of orders $(d_\rho,d_\gamma)$, autoregressive parameter vector $\bm{\alpha}$, and concentration parameter $M>0$, denoted by $\smRPM_{d_\rho,d_\gamma}(\bm{\alpha},M)$.
Further assuming that the first partition follows a CRP, the resulting joint probabilistic structure of smRPM is
\begin{equation*} \label{eq:smRPM_prior}
    \begin{split}
        \rho_1 &\sim \CRP(M), \\
        \pr(\rho_k=\lambda \mid \bm{\gamma}_{k-d_\rho+1:k}, \rho_{k-1}) &= \frac{\pr(\rho_k=\lambda)\ind{\lambda\in P_{\mathcal{R}_k}}}{\sum_{\lambda'\in P}\pr(\rho_k=\lambda')\ind{\lambda'\in P_{\mathcal{R}_k}}}, \\
        \gamma_{ik} &\indsim \Ber(\alpha_{ik}).
    \end{split}
\end{equation*}
where $\alpha_{ik} = \alpha_k$ if $d_\gamma=0$, and $\alpha_{ik}=\pi(\bm{\alpha}^\top\mathbf{z}_{ik})$ if $d_\gamma>0$.
We remark that the formulation in \eqref{eq:gamma_prior} could, in principle, be applied also when $d_\gamma=0$, however we opted for the same prior formulation as tRPM due to its conjugacy and linked improved mixing.
Notably, if $(d_\rho,d_\gamma)=(1,0)$, then $\smRPM_{d_\rho,d_\gamma}(\bm{\alpha},M)$ simplifies to $\tRPM(\bm{\alpha},M)$.

Depending on $d_\gamma$, we need different prior formulations for the parameter $\bm{\alpha}$. 
If $d_\gamma=0$, we assume $\alpha_k\iidsim\Beta(a_\alpha,b_\alpha)$ as in \citet{page_DependentModelingTemporal_2022}.
If $d_\gamma>0$, we use the data-augmentation strategy introduced in \citet{polson_BayesianInferenceLogistic_2013}.
For each $\gamma_{ik}$, we a introduce latent variable $\omega_{ik}$ distributed as a Pólya-Gamma distribution with parameters $0$ and $1$. Following Theorem 1 of \citet{polson_BayesianInferenceLogistic_2013}, the binomial likelihoods in \eqref{eq:gamma_prior} can be represented as mixtures of Gaussians with respect to a Pólya-Gamma distribution. Consistently, we assume a Gaussian prior  for $\bm{\alpha}$, $\bm{\alpha} \sim \Normal_2(\mathbf{a},\mathbf{A})$, where  the vector $\mathbf{a}\in\R^2$ and the positive-definite matrix $\mathbf{A}\in\R^{2\times2}$ are hyperparameters.

\subsection{Conditional posterior distributions}
In this section, we discuss how to perform posterior inference via Gibbs sampling deriving the full conditional distributions for the partitions and the auxiliary variables borrowing ideas from \citet{neal_MarkovChainSampling_2000}.
We start by stating the two following propositions, which extend the ones presented in \cite{page_DependentModelingTemporal_2022} to the non-Markovian case. Their proofs are reported in Section SM.2 of the Supplementary Materials.

\begin{proposition} \label{prop:rhok}
    The conditional probability reported in Equation~\eqref{eq:rho_prior} can be written as
    \begin{equation*}
        \pr(\rho_k \mid \bm{\gamma}_{k-d_\rho+1:k}, \rho_{k-1}) = \frac{\pr(\rho_k)}{\pr(\rho_k^{\mathcal{R}_k})} \ind{\rho_{k-1}^{\mathcal{R}_k} = \rho_k^{\mathcal{R}_k}} = 
        \begin{cases}
            \pr(\rho_k) / \pr(\rho_k^{\mathcal{R}_k}) & \text{if $\rho_{k-1}^{\mathcal{R}_k} = \rho_k^{\mathcal{R}_k}$} \\
            0 & \text{otherwise}
        \end{cases}.
    \end{equation*}
\end{proposition}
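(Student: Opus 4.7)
The plan is to start directly from the definition in equation~\eqref{eq:rho_prior} and rewrite the normalizing sum as a marginal probability of the reduced partition. The only non-routine point is to argue that summing an EPPF over all partitions of $\{1,\ldots,n\}$ whose restriction to $\mathcal{R}_k$ is a fixed reduced partition yields precisely the EPPF of that reduced partition.

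First I would rewrite the indicator in \eqref{eq:rho_prior}. By the characterization of compatibility given right after Definition~\ref{def:comp_part}, $\lambda \in P_{\mathcal{R}_k}$ if and only if $\lambda^{\mathcal{R}_k} = \rho_{k-1}^{\mathcal{R}_k}$. Hence the conditional probability can be expressed as
\begin{equation*}
\pr(\rho_k = \lambda \mid \bm{\gamma}_k^{(d_\rho)}, \rho_{k-1})
= \frac{\pr(\rho_k = \lambda)\,\ind{\lambda^{\mathcal{R}_k} = \rho_{k-1}^{\mathcal{R}_k}}}
       {\sum_{\lambda' \in P} \pr(\rho_k = \lambda')\,\ind{(\lambda')^{\mathcal{R}_k} = \rho_{k-1}^{\mathcal{R}_k}}}.
\end{equation*}

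Next I would identify the denominator with the marginal probability of the reduced partition. Summing $\pr(\rho_k = \lambda')$ over all $\lambda'$ that, once restricted to $\mathcal{R}_k$, produce the fixed reduced partition $\rho_{k-1}^{\mathcal{R}_k}$ is by definition the marginal $\pr(\rho_k^{\mathcal{R}_k} = \rho_{k-1}^{\mathcal{R}_k})$. Since $\pr(\rho_k = \cdot)$ is an EPPF, its marginal over any subset of labels $\mathcal{R}_k$ is again an EPPF of the same family applied to the reduced partition, which we may compactly write as $\pr(\rho_{k-1}^{\mathcal{R}_k})$. Conditionally on the event $\rho_{k-1}^{\mathcal{R}_k} = \rho_k^{\mathcal{R}_k}$ enforced by the indicator in the numerator, this is the same as $\pr(\rho_k^{\mathcal{R}_k})$, and therefore
\begin{equation*}
\pr(\rho_k \mid \bm{\gamma}_k^{(d_\rho)}, \rho_{k-1})
= \frac{\pr(\rho_k)}{\pr(\rho_k^{\mathcal{R}_k})}\,\ind{\rho_{k-1}^{\mathcal{R}_k} = \rho_k^{\mathcal{R}_k}},
\end{equation*}
which is the claim. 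The piecewise form follows immediately.

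The only subtle step is the marginalization identity $\sum_{\lambda' :\,(\lambda')^{\mathcal{R}_k} = \nu} \pr(\rho_k = \lambda') = \pr(\rho_k^{\mathcal{R}_k} = \nu)$. This is just the law of total probability applied to partitions, but it does rely on the fact that an EPPF, by exchangeability, induces a coherent marginal law on the restriction to any fixed subset of units; I would state this explicitly and, if needed, appeal to the standard consistency property of exchangeable random partitions (e.g., the $\CRP(M)$ used for $\rho_1$ satisfies it trivially through sequential prediction). Everything else is algebraic manipulation of the ratio in~\eqref{eq:rho_prior}.
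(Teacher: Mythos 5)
Your proof is correct and follows essentially the same route as the paper's: rewrite the compatibility indicator as $\ind{\rho_{k-1}^{\mathcal{R}_k}=\rho_k^{\mathcal{R}_k}}$ and identify the normalizing sum with the marginal law of the reduced partition by summing out the allocations of units not in $\mathcal{R}_k$, exactly as the paper does. Your extra remark on EPPF consistency makes explicit a step the paper leaves implicit, but it is the same argument.
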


\begin{proposition} \label{prop:Rk}
    Recalling that $\mathcal{R}_{k'}$ are the indices of units that are not reallocated when moving from index $k'-1$ to $k'$ and let $\mathcal{R}_{k'}^{(+i,k)}$ and $\mathcal{R}_{k'}^{(-i,k)}$ be the same quantity with $\gamma_{ik}$ set to 1 and 0, respectively. Then, for $k=k'-d_\rho+1,\ldots,k'$, $k'=1,\ldots,K$,
    \begin{equation*}
    \begin{split}
        \mathcal{R}_{k'}^{(+i,k)} &=
        \begin{cases}
            \mathcal{R}_{k'} \cup \{i\} & \text{if $\max\left\{\bm{\gamma}_{i,k'-d_\rho+1:k'}\right\}=0$} \\
            \mathcal{R}_{k'} & \text{otherwise}
        \end{cases}, \\
        \mathcal{R}_{k'}^{(-i,k)} &=
        \begin{cases}
            \mathcal{R}_{k'} \setminus \{i\} & \text{if $\max\left\{\bm{\gamma}_{i,k'-d_\rho+1:k'}^{(-k)}\right\}=0$ and $\gamma_{ik}=1$ in $\mathcal{R}_{k'}$} \\
            \mathcal{R}_{k'} & \text{otherwise}
        \end{cases}.
    \end{split}
    \end{equation*}
    where $\bm{\gamma}_{i,k'-d_\rho+1:k'}^{(-k)}$ denotes $\bm{\gamma}_{i,k'-d_\rho+1:k'}$ with $\gamma_{ik}$ set to 0.
    
    Moreover, it holds that $\mathcal{R}_{k'}^{(+i,k)}=\mathcal{R}_{k'}^{(-i,k)}\cup\{i\}$ if $\max\left\{\bm{\gamma}_{i,k'-d_\rho+1:k'}^{(-k)}\right\}=0$.
\end{proposition}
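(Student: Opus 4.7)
My plan is to prove each of the three claims by direct case analysis on the current state of the window $\bm{\gamma}_{ik'}^{(d_\rho)}$, exploiting the fact that modifying the single entry $\gamma_{ik}$ can only affect the membership of unit $i$ in $\mathcal{R}_{k'}$: no other unit $i'\neq i$ has $\gamma_{ik}$ appearing inside its own window $\bm{\gamma}_{i'k'}^{(d_\rho)}$. Since $\mathcal{R}_{k'}$ is defined through the $\max$ of the binary entries of $\bm{\gamma}_{ik'}^{(d_\rho)}$, the whole question reduces to whether flipping the bit $\gamma_{ik}$ flips that $\max$. Note that the claims are substantive only when $k\in\{k'-d_\rho+1,\ldots,k'\}$ so that $\gamma_{ik}$ is actually an entry of the window; outside this range both modified sets coincide with $\mathcal{R}_{k'}$, and each identity falls into its ``otherwise'' branch.

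For the identity on $\mathcal{R}_{k'}^{(+i,k)}$ I would argue as follows. If $\max\{\bm{\gamma}_{ik'}^{(d_\rho)}\}=0$, every entry of the window is currently zero and $i\notin\mathcal{R}_{k'}$; replacing $\gamma_{ik}$ by $1$ pushes the $\max$ to $1$, inserting $i$ while leaving all other units unaffected, and giving $\mathcal{R}_{k'}^{(+i,k)}=\mathcal{R}_{k'}\cup\{i\}$. If instead $\max\{\bm{\gamma}_{ik'}^{(d_\rho)}\}=1$, some entry of the window is already $1$, so setting $\gamma_{ik}=1$ preserves the $\max$ and the set is unchanged.

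The identity on $\mathcal{R}_{k'}^{(-i,k)}$ is symmetric. If $\max\{\bm{\gamma}_{ik'}^{(d_\rho)(-k)}\}=0$ and $\gamma_{ik}=1$, then $\gamma_{ik}$ is the unique $1$-entry of the window and $i\in\mathcal{R}_{k'}$ only because of it; zeroing $\gamma_{ik}$ drops the $\max$ to $0$ and removes $i$, giving $\mathcal{R}_{k'}^{(-i,k)}=\mathcal{R}_{k'}\setminus\{i\}$. In every complementary situation, either a second $1$-entry is present in the window (so the $\max$ remains $1$) or $\gamma_{ik}$ is already $0$ (so the operation is vacuous), and $\mathcal{R}_{k'}$ is preserved.

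The final identity I would obtain by combining the two formulas under the hypothesis $\max\{\bm{\gamma}_{ik'}^{(d_\rho)(-k)}\}=0$ and splitting on the current value of $\gamma_{ik}$: when $\gamma_{ik}=0$ the first formula gives $\mathcal{R}_{k'}^{(+i,k)}=\mathcal{R}_{k'}\cup\{i\}$ while the second gives $\mathcal{R}_{k'}^{(-i,k)}=\mathcal{R}_{k'}$, so $\mathcal{R}_{k'}^{(+i,k)}=\mathcal{R}_{k'}^{(-i,k)}\cup\{i\}$; when $\gamma_{ik}=1$ the first returns $\mathcal{R}_{k'}$ and the second returns $\mathcal{R}_{k'}\setminus\{i\}$, whose union with $\{i\}$ is again $\mathcal{R}_{k'}$. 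I do not anticipate any substantive obstacle, since the entire argument is bookkeeping on a single-bit flip; the only care needed is to respect the convention $\gamma_{ik}=0$ for $k\le 0$ when the window $\bm{\gamma}_{ik'}^{(d_\rho)}$ straddles the left boundary, which simply reinforces the zero-entries already counted.
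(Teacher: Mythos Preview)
Your proposal is correct and follows essentially the same route as the paper: a direct case analysis on whether the window $\bm{\gamma}_{ik'}^{(d_\rho)}$ (or its variant $\bm{\gamma}_{ik'}^{(d_\rho)(-k)}$) already contains a $1$, using that only the membership of unit $i$ can change. The paper handles the final identity slightly more directly---observing that $\mathcal{R}_{k'}^{(+i,k)}$ always contains $i$ while $\mathcal{R}_{k'}^{(-i,k)}$ does not under the stated hypothesis---rather than splitting on the current value of $\gamma_{ik}$ as you do, but the arguments are equivalent.
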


From the previous propositions, we can derive the full conditional distribution for $\gamma_{ik}$, which is
\footnotesize
\begin{equation} \label{eq:full_cond_gamma}
    \pr(\gamma_{ik}=1\mid-) = \frac{\pi_{ik}^{(+i,k)}}{\pi_{ik}^{(+i,k)} + \pi_{ik}^{(-i,k)} \prod_{k'=k}^{k+d_\rho-1} \pr(\rho_{k'}^{\mathcal{R}_{k'}^{(+i,k)}})/\pr(\rho_{k'}^{\mathcal{R}_{k'}^{(-i,k)}})} \prod_{k'=k}^{k+d_\rho-1} \ind{\rho_{k'-1}^{\mathcal{R}_{k'}^{(+i,k)}}=\rho_{k'}^{\mathcal{R}_{k'}^{(+i,k)}}},
\end{equation}
\normalsize
where $\pi_{ik}^{(+i,k)}=\alpha_k$ and $\pi_{ik}^{(-i,k)}=1-\alpha_k$ if $d_\gamma=0$, 
while $\pi_{ik}^{(+i,k)}$ and $\pi_{ik}^{(-i,k)}$ are the product of $d_\gamma+1$ probabilities $\prod_{k'=k}^{k+d_\gamma} \exp(\gamma_{ik'}\bm{\alpha}^\top\mathbf{z}_{ik'}) / \{1+\exp(\bm{\alpha}^\top\mathbf{z}_{ik'})\}$ with $\gamma_{ik}$ set to 1 and 0, respectively, if $d_\gamma>0$.
The quantity $\pr(\rho_{k'}^{\mathcal{R}_{k'}^{(+i,k)}})/\pr(\rho_{k'}^{\mathcal{R}_{k'}^{(-i,k)}})$ can be easily computed using Proposition~\ref{prop:Rk} and Neal's Algorithm 8 with one auxiliary parameter \citep{neal_MarkovChainSampling_2000}. As a consequence of Proposition~\ref{prop:Rk}, $\pr(\rho_{k'}^{\mathcal{R}_{k'}^{(+i,k)}})/\pr(\rho_{k'}^{\mathcal{R}_{k'}^{(-i,k)}})$ needs to be computed only when updating $\gamma_{ik}$ changes the set of compatible partitions for $\rho_{k}$, since
\begin{equation*}
    \frac{\pr\left(\rho_{k'}^{\mathcal{R}_{k'}^{(+i,k)}}\right)}{\pr\left(\rho_{k'}^{\mathcal{R}_{k'}^{(-i,k)}}\right)} =
    \begin{cases}
        \pr\left(c_{ik}\mid \rho_{k'}^{\mathcal{R}_{k'}^{(-i,k)}}\right) & \text{if $\mathcal{R}_{k'}^{(+i,k)}=\mathcal{R}_{k'}^{(-i,k)}\cup\{i\}$} \\
        1 & \text{otherwise}
    \end{cases}.
\end{equation*}
Assuming $\rho_1\sim\CRP(M)$, when $\mathcal{R}_{k'}^{(+i,k)}=\mathcal{R}_{k'}^{(-i,k)}\cup\{i\}$, that is when $c_{ik}\notin\rho_{k'}^{\mathcal{R}_{k'}^{(-i,k)}}$, we have
\begin{equation} \label{eq:cik}
    \pr\left(c_{ik}=j\mid \rho_{k'}^{\mathcal{R}_{k'}^{(-i,k)}}\right) =
    \begin{cases}
        n_j(\rho_{k'}^{\mathcal{R}_{k'}^{(-i,k)}}) \Big/ \left(n(\rho_{k'}^{\mathcal{R}_{k'}^{(-i,k)}}) + M\right)  & \text{if $j=1,\ldots,J(\rho_{k'}^{\mathcal{R}_{k'}^{(-i,k)}})$} \\
        M \Big/ \left(n(\rho_{k'}^{\mathcal{R}_{k'}^{(-i,k)}}) + M\right) & \text{if $j=J(\rho_{k'}^{\mathcal{R}_{k'}^{(-i,k)}})+1$}
    \end{cases},
\end{equation}
where $n_j(\rho_{k'}^{\mathcal{R}_{k'}^{(-i,k)}})$ is the frequency of cluster $j$ in $\rho_{k'}^{\mathcal{R}_{k'}^{(-i,k)}}$, $n(\rho_{k'}^{\mathcal{R}_{k'}^{(-i,k)}})$ is the number of units in $\rho_{k'}^{\mathcal{R}_{k'}^{(-i,k)}}$ and $J(\rho_{k'}^{\mathcal{R}_{k'}^{(-i,k)}})$ is the number of clusters in ${\mathcal{R}_{k'}^{(-i,k)}}$.

We use again Proposition~\ref{prop:rhok} and Neal's Algorithm 8 with one auxiliary parameter to derive the full conditional distribution for $c_{ik}$, which is
\small
\begin{equation*}
    \pr(c_{ik}=j\mid-) \propto \pr(\rho_k^{(i\to j)}) \pr(\mathbf{Q}\mid c_{ik}=j,\ldots) \ind{\rho_{k-1}^{\mathcal{R}_k} = (\rho_k^{(i\to j)})^{\mathcal{R}_k}} \ind{(\rho_k^{(i\to j)})^{\mathcal{R}_{k+1}} = \rho_{k+1}^{\mathcal{R}_{k+1}}},
\end{equation*}
\normalsize
where $\pr(\mathbf{Q}\mid c_{ik}=j,\ldots)$ is the joint distribution of the random variables $\mathbf{Q}$ whose distributions depend on $c_{ik}$ with $c_{ik}$ set to $j$.
The quantity $\rho_k^{(i\to j)} = \left\{S^{(-i)}_{k1},\ldots,S^{(-i)}_{kj}\cup\{i\},\ldots,S^{(-i)}_{kJ(\rho_k^{(i\to j)})}\right\}$ denotes the partition $\rho_k$ with unit $i$ assigned to cluster $j$, where $J(\rho_k^{(i\to j)})$ is the number of clusters in $\rho_k^{(i\to j)}$ and $S^{(-i)}_{kj}$ be the $j$th cluster at point $k$, $S_{kj}$, with unit $i$ removed. Assuming again $\rho_1\sim\CRP(M)$,
\begin{equation*}
    \pr\left(\rho_k^{(i\to j)}\right) \propto
    \begin{cases}
        |S^{(-i)}_{kj}| & \text{if $j=1,\ldots,J(\rho_k^{(i\to j)})$} \\
        M & \text{if $j=J(\rho_k^{(i\to j)})+1$}
    \end{cases},
\end{equation*}
the full conditional distribution for $c_{ik}=j$ becomes
\scriptsize
\begin{equation} \label{eq:full_cond_cik}
    \pr(c_{ik}=j\mid-) \propto
    \begin{cases}
        |S^{(-i)}_{kj}| \pr(\mathbf{Q}\mid c_{ik}=j,\ldots) \ind{\rho_{k-1}^{\mathcal{R}_k} = (\rho_k^{(i\to j)})^{\mathcal{R}_k}} \ind{(\rho_k^{(i\to j)})^{\mathcal{R}_{k+1}} = \rho_{k+1}^{\mathcal{R}_{k+1}}} & \text{if $j=1,\ldots,J(\rho_k^{(i\to j)})$} \\
        M \pr(\mathbf{Q}\mid c_{ik}=j,\ldots) \ind{\rho_{k-1}^{\mathcal{R}_k} = (\rho_k^{(i\to j)})^{\mathcal{R}_k}} \ind{(\rho_k^{(i\to j)})^{\mathcal{R}_{k+1}} = \rho_{k+1}^{\mathcal{R}_{k+1}}} & \text{if $j=J(\rho_k^{(i\to j)})+1$}
    \end{cases},
\end{equation}
\normalsize
where any auxiliary parameters, i.e., variables related to the new cluster, are drawn from their prior distributions as in \citet{neal_MarkovChainSampling_2000}, used to compute $\pr(c_{ik}=J(\rho_k^{(i\to j)})+1\mid-)$, and then kept if the new cluster is selected, discarded otherwise.
The first indicator function implies that the local cluster $c_{ik}$ is updated only if the local cluster can be reallocated when moving from index $k-1$ to $k$ since the indicator function sets $\Pr(c_{ik}=j\mid-)$ to 0 if the partitions $\rho_{k-1}$ and $\rho_k^{(i\to j)}$ are not compatible with respect to $\bm{\gamma}_{k-d_\rho+1:k}$; the second indicator implies that the local cluster $c_{ik}$ cannot be reallocated to another already-existent cluster if the local cluster cannot be reallocated when moving from index $k$ to $k+1$ since the indicator function sets $\Pr(c_{ik}=j\mid-)$ to 0 if the partitions $\rho_k^{(i\to j)}$ and $\rho_{k+1}$ are not compatible with respect to $\bm{\gamma}_{k-d_\rho+2:k+1}$.
In practice, the two indicator functions are first evaluated, and the full conditional is then computed only if $c_{ik}$ can be allocated to a different cluster.

Finally, we report the full conditional distributions for $\bm{\alpha}$ and $\bm{\omega}$, where $\bm{\omega}$ is a $n\times K$ matrix containing the $\omega_{ik}$'s.
When $d_\gamma=0$, we have only $\bm{\alpha}$ whose elements have the following full conditional distributions
\begin{equation} \label{eq:full_cond_alpha}
    \alpha_k \mid- \sim \Beta\left( a_\alpha+\sum_{i=1}^n\gamma_{ik}, b_\alpha+n-\sum_{i=1}^n\gamma_{ik}\right).
\end{equation}
When $d_\gamma>0$, we have both $\bm{\alpha}$ and $\bm{\omega}$ whose elements have the following full conditional distributions
\begin{equation} \label{eq:full_cond_alpha_omega}
    \bm{\alpha} \mid- \sim \Normal\left( (\mathbf{Z}^\top\Omega\mathbf{Z}+\mathbf{A}^{-1})^{-1}(\mathbf{Z}^\top\kappa+\mathbf{A}^{-1}\mathbf{a}), (\mathbf{Z}^\top\Omega\mathbf{Z}+\mathbf{A}^{-1})^{-1} \right), \quad \omega_{ik} \mid - \sim \PG\left( 1, \bm{\alpha}^\top\mathbf{z}_{ik} \right),
\end{equation}
where $\kappa=\text{vec}(\bm{\gamma})-1/2\in\R^{nK}$ and $\Omega=\text{diag}\{\text{vec}(\bm{\omega})\}\in\R^{nK\times nK}$.

A general Gibbs Sampler, which can be used for any model employing smRPM, is reported in Algorithm \ref{alg:gibbs}.
Although our algorithm may appear to be a straightforward extension of the tRPM sampler, several additional considerations are required to accommodate the increased complexity of the latent variable updates in \eqref{eq:full_cond_gamma} and \eqref{eq:full_cond_cik}.
The computational complexity of the full conditional distribution of $c_{ik}$ depends on the dependence order $d_\rho$ only through the sets $\mathcal{R}_k$ and $\mathcal{R}_{k+1}$, namely the indices of units that are not reallocated when moving from index $k-1$ to $k$, and from $k$ to $k+1$. These sets are computed once per MCMC iteration, after updating the auxiliary variables. 
On the other hand, the computational complexity of the full conditional distribution for $\gamma_{ik}$ depends on both dependence orders.
For $d_\gamma>0$, the probabilities $\pi_{ik}^{(+i,k)}$ and $\pi_{ik}^{(-i,k)}$ take a more involved form. However, these quantities do not need to be evaluated for every auxiliary variable  $\gamma_{ik}$ since it is set to zero whenever at least one of the $d_\rho$ indicator functions is zero. As a result, fewer updates are needed as $d_\rho$ increases.

These insights are confirmed by the MCMC runtimes reported in Section SM.5 of the Supplementary Materials. Runtime is primarily driven by $d_\rho$, with higher-order dependence leading to faster execution, especially in scenarios with large $n$ and  $K$. Despite the increased model complexity, larger values of $d_\rho$ are associated with algorithms that are as fast as, or even faster than, their counterparts with the same $d_\gamma$ and smaller $d_\rho$. In contrast, setting $d_\gamma > 0$ generally increases computational time due to the additional latent variables that must be updated.

\begin{algorithm}[t]
    \caption{One iteration of the Gibbs Sampler for a model employing smRPM}
    \label{alg:gibbs}
    \begin{enumerate}
        \item[1.] Sample $\gamma_{ik}$, for $k=1,\ldots,K$ and $i=1,\ldots,n$, using \eqref{eq:full_cond_gamma}.
        \item[2.] Sample $c_{ik}$, for $k=1,\ldots,K$ and $i=1,\ldots,n$, using \eqref{eq:full_cond_cik}.
        \item[3a.] If $d_\gamma=0$: sample $\alpha_k$, for $k=1,\ldots,K$, using \eqref{eq:full_cond_alpha}.
        \item[3b.] If $d_\gamma>0$: sample $\bm{\alpha}$ and $\omega_{ik}$, for $k=1,\ldots,K$ and $i=1,\ldots,n$, using \eqref{eq:full_cond_alpha_omega}.
        \item[4.] Sample remaining model-specific parameters.
    \end{enumerate}
\end{algorithm}

\section{Simulation study}
\label{sec:simulations}
We assess the performance of $\smRPM$ through three simulation studies.  In the first one, we consider time series data and assess the model performance as a function of $d_\rho$ and $d_\gamma$. In the second and third, instead, we consider functional data. In this case we match $d_\rho$ with the order of the B-spline basis and discuss the empirical performance of different $d_\gamma$. The first two simulations provide guidelines to choose both  $d_\rho$ and $d_\gamma$ which are applied in the third simulation where the method is compared with existing alternatives in the motivating task of functional local clustering problem. 

In all studies with functional data, we use B-splines with equally spaced knots.
We define a $n^{(ref)}\times K$ matrix of cluster assignments $\mathbf{C}^{(ref)}$ for $n^{(ref)}$ reference time series or functional observations, and build the $n\times K$ matrix of cluster assignments $\mathbf{C}$ for the simulated dataset by replicating $n^{(rep)}$ times each row of $\mathbf{C}^{(ref)}$.
By doing so, we obtain simulated datasets containing $n^{(rep)}$ realizations of each of these reference time series or functional observations. We set $n^{(ref)}=5$ and let $n^{(rep)}$ vary over the set $\{10,30\}$, thus obtaining two scenarios with 50 and 150 observations, respectively. Data are generated with high and low signal-to-noise ratio. 

To assess the ability of each model in retrieving the true sequence of partitions, we computed the Adjusted Rand Index \citep[ARI, ][]{hubert_ComparingPartitions_1985} between the true partition at each index $k$, $\rho_k$, and the visited partition at iteration $b$ of the MCMC, $\hat{\rho}^{(b)}_k$.
Then, we computed a posterior estimate of ARI,  as
\begin{equation*}
    \text{ARI}({\rho_k},\hat{\rho}_k^{(1:B)}) = \frac{1}{|\mathcal{B}|}\sum_{b\in\mathcal{B}} \ARI(\rho_k,\hat{\rho}_k^{(b)}),
\end{equation*}
where $\hat{\rho}_k^{(1:B)}$ collects all the $B$ partitions $\hat{\rho}^{(b)}_k$ sampled by the MCMC algorithm, and $\mathcal{B}$ is a set of 1000 MCMC samples, obtained by collecting 10000 samples, discarding the first 5000 and thinning by 5, in each model considered during the simulation studies.
The results presented henceforth are based on a single MCMC run. Visual inspection of several  chains revealed no apparent convergence or mixing issues. The sampler is initialized by assigning all observations to a single cluster.
Details on the prior hyperparameter choice and initialization of cluster-specific parameters are reported in Section SM.5 of the Supplementary Materials.

In the first simulation, data are simulated according to
\begin{equation*}
    Y_{ik} \mid \bm{\mu}^*_k, \sigma^{*2}, c_{ik} \indsim \Normal(\mu^*_{k,c_{ik}},\sigma^{*2}), \quad i=1,\ldots,n,\quad k=1,\ldots,K,
\end{equation*}
with $\sigma^{*2} \in  \{1,4\}$. Unlike the functional setting, in which $d_\rho$ coincides with the order of the B-spline, $d_\rho$ needs to be tuned or selected according to previous knowledge. To evaluate the performance of the model under correct and incorrect specification of $d_\rho$, we consider two sub-scenarios with different $\bm{\mu}^*$ and $\mathbf{C}$, where the sequence of partitions $\mathbf{C}$ exhibits Markovian dependence in the first sub-scenario and second-order dependence in the second one. 
We fit the model reported in Section SM.3 of the Supplementary Materials for different values of $d_\rho\in\{1,2,3\}$ and $d_\gamma\in\{0,d_\rho\}$. Notably, tRPM of \citet{page_DependentModelingTemporal_2022} is a special case with $d_\rho=1$ and $d_\gamma=0$ and is reported for comparison.

\begin{figure}
    \centering
    \includegraphics[width=0.9\linewidth]{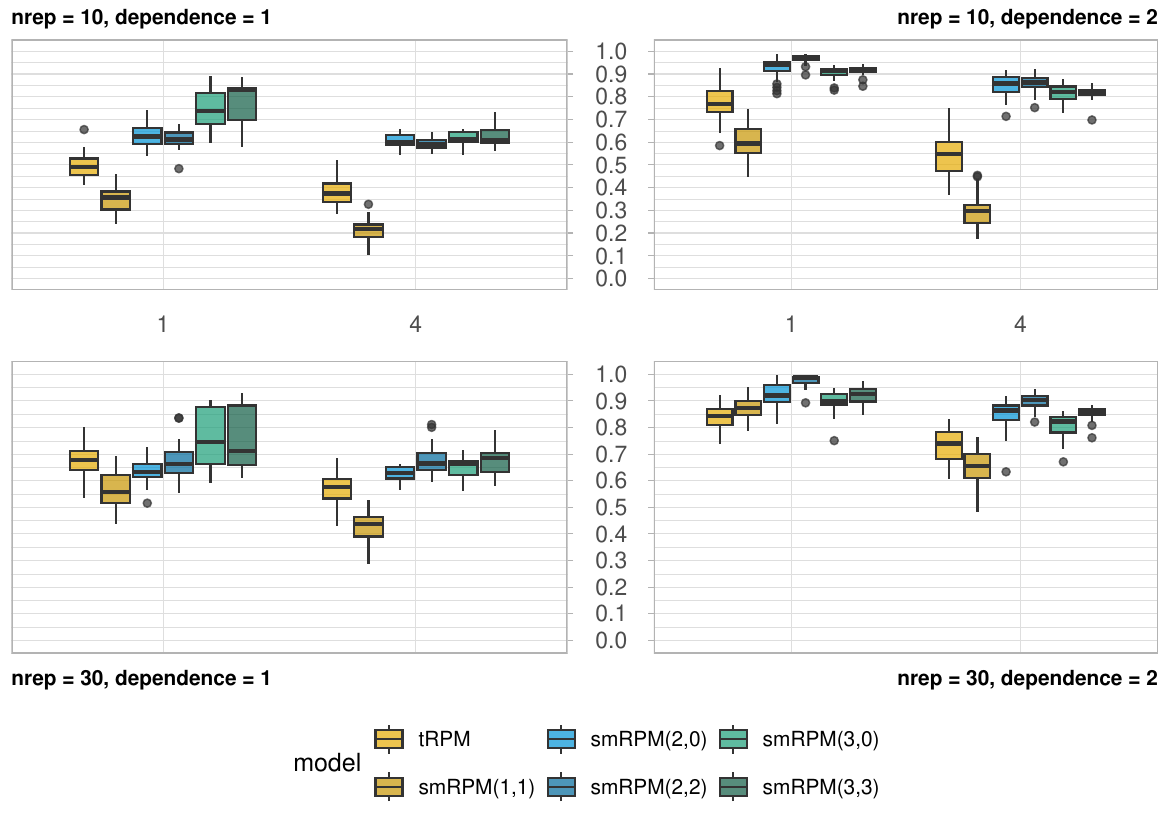}
    \caption{Boxplot of the average posterior ARI across the sequence of partitions computed on $R=50$ simulated datasets under Markovian (top) and second-order dependence (bottom). Each figure considers a different combination of number of functions $n^{(rep)}$ and order dependence: from top left clockwise, $(10,1)$, $(10,2)$, $(30,2)$ and $(30,1)$. Each figure shows the low-noise scenario on the left ($\sigma^{*2}=1$) and the high-noise one on the right ($\sigma^{*2}=4$).}
    \label{fig:boxplot_ARI}
\end{figure}

Figure~\ref{fig:boxplot_ARI} reports the results. 
Considering higher-order dependence in the partition distribution always leads to model improvement, the same holds for the dependence in the distribution of the auxiliary variables only when $d_\rho>1$. We observe that the model with Markovian dependence in both distributions almost always turns out to be the worst model. In the other cases, models having the same order of dependence in the two distributions always outclass their counterparts with no dependence in the distribution of auxiliary variables providing empirical evidence to choose $d_\rho =d_\gamma$.
Considering $d_\gamma$ equal to $d_\rho$, and not $d_\gamma\in\{1,\ldots,d_\rho-1,d_\rho+1,\ldots,n-1\}$, is purely a pragmatical choice which consistently improves the performance of the model, at least compared to the counterpart with $d_\gamma=0$. Empirically, performance is driven primarily by the choice of $d_\rho$, while fine-tuning $d_\gamma$ may provide only marginal additional gains relative to the default specification $d_\rho = d_\gamma$. Section~SM.5 of the Supplementary Materials reports results for additional combinations of $(d_\rho,d_\gamma)$ and does not provide compelling evidence in favor of alternatives to the choice adopted here.
The noise introduced by $\sigma^{*2}$ influences the models' performance while not changing their ranking.  Additional details, including time-specific posterior ARIs, are reported Section SM.5 of the Supplementary Materials.

As second simulation experiment we simulated data under the model introduced in \eqref{eq:model_theta}-\eqref{eq:model_curves}. All curves are assumed to be observed on a common equispaced grid $(x_1,\ldots,x_T)$ of $T=100$ evaluation points over the interval $[0,1]$. 
We considered cubic B-splines with $18$ evenly spaced knots between 0 and 1, corresponding to $K=20$ basis functions, to simulate the data.
Details are reported Section SM.5 of the Supplementary Materials.

We fit the model with $(d_\rho,d_\gamma)\in\{(1,0),(3,0),(3,3)\}$ representing the tRPM of \citet{page_DependentModelingTemporal_2022}, and the proposed smRPM under an independence assumption for the auxiliary variables and under the matching order specification suggested by the empirical results obtained in the first simulation study.
The dependence order $d_\rho$ matches the degree of the cubic B-splines to fully exploit the local property, while $d_\gamma$ equal to zero or $d_\rho$ is again a pragmatic choice, which leads to robust positive results.
The considerations proposed for selecting $d_\gamma$ in the non-functional case naturally extend to the functional case.
To assess sensitivity to the number of basis functions in the expansion, each model is fitted three times: once using the same B-spline basis employed to generate the data, and twice using misspecified bases with $K-5$ and $K+5$ basis functions, respectively.

Figure~\ref{fig:boxplot_fARI} reports the results based on the functional version of ARI (yARI), as defined in Section SM.5 of the Supplementary Materials. As expected, models whose partition dependence aligns to the degree of the B-spline outperform the model utilizing the tRPM. Moreover, including 3-order dependence also in the distribution of the auxiliary variable leads to a further improvement in the performance of the model, as well as reduced variability in the results, consistently with the results obtained in the first simulation.
The choice of the B-spline basis for estimation has an impact on performance, with results suggesting that overestimating the number of basis function is preferable rather than underestimating it.
Similarly to the non-functional case, the noise parameter $\sigma^{2}$ and the number of observations influence the model performance but do not affect their ranking.

\begin{figure}
    \centering
    \includegraphics[width=0.9\linewidth]{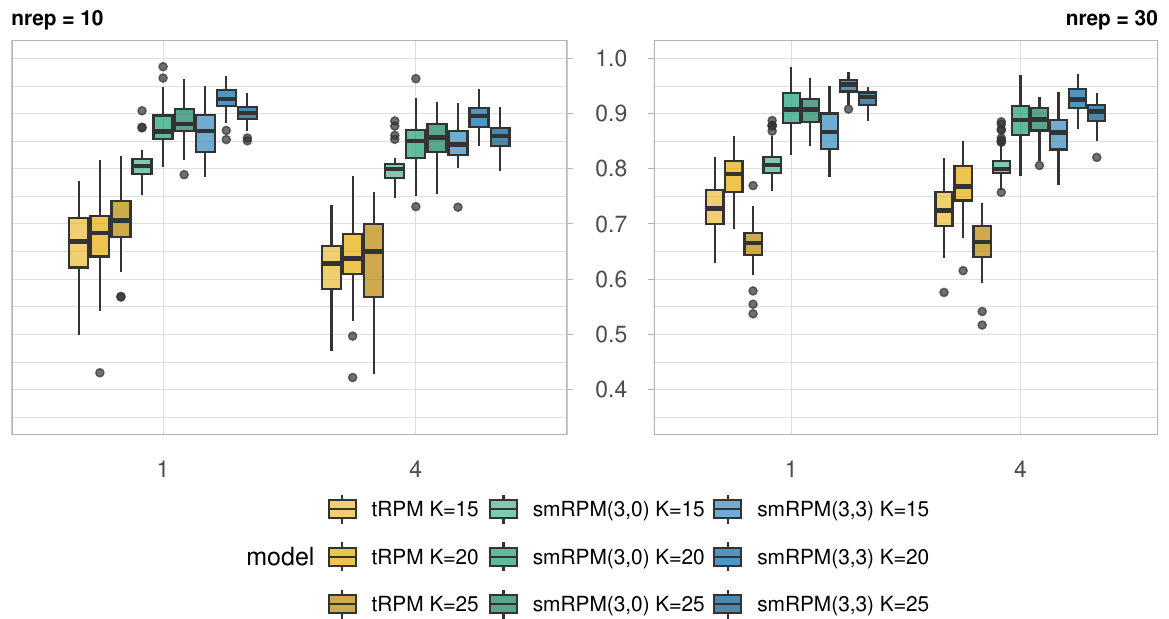}
    \caption{Boxplot of the average posterior yARI across the sequence of functional partitions computed on $R=50$ simulated datasets. The figure on the left considers $n^{(rep)}=10$, while the one on the right $n^{(rep)}=30$. Each figure shows the low-noise scenario on the left ($\sigma^{2}=1$) on the left and the high-noise one on the right ($\sigma^{2}=4$).}
    \label{fig:boxplot_fARI}
\end{figure}

Finally, we assess the empirical performance of the proposed approach with $d_\rho=d_\gamma$ set equal to the degree of the B-spline basis, comparing it to other competing methods. To the best of our knowledge, the only alternative implementation currently available, apart from tRPM, is the Hidden Markov Functional Local Clustering Model (HMFLCM) introduced by \citet{fan_BayesianSemiparametricLocal_2024}. The latter is a two-step procedure that performs global clustering in the first step and local clustering on the global cluster in the second step that exploits the same local property of B-splines of the proposed functional smRPM.

To evaluate the competing approaches under both correct and incorrect specifications, we consider scenarios where the B-splines in the data-generating mechanism either matches or differs from that of the models.
Under correct specification, we both simulate the data and fit the model using quadratic B-splines with $K=20$ basis functions and a knots coinciding with the $D=19$ evaluation points.
Under misspecification, we fit the proposed approach using B-splines of degree 1, 3, 4 and 5 with $K=20$ basis functions, which is substantially less than the 80 basis functions of the 4-degree B-splines employed in the data-generating mechanism. In this second scenario, all curves are assumed to be observed on a common equispaced grid of $T=25$ evaluation points over the interval $[0,1]$.
This allows us to evaluate the sensitivity to the spline degree and to demonstrate that satisfactory performance can be achieved even with a limited number of basis functions and evaluation points. Details are reported in Section SM.5 of the Supplementary Materials.

Figure~\ref{fig:boxplot_fARI_RMSE_HMFLCM_s2} reports the results under miss-specification. 
In addition to yARI, we also computed a posterior estimate of Root Mean Square Error (RMSE) to assess the ability of each model in predicting observations. Details on how the RMSE is computed are reported in Section SM.5 of the Supplementary Materials.
In terms of yARI, the dependence order $d_\rho$, which always matches the degree of the B-splines, strongly influences the performance of the proposed approach with the model $\smRPM(3,3)$ performing the best, followed by the tRPM model.
The functional smRPM achieves the best performance in smaller sample size settings, while the relative performance of HMFLCM improves as the sample size increases. Conversely, for RMSE, HMFLCM consistently shows the poorest performance, though the relative difference diminishes with larger sample sizes.
\begin{figure}
    \centering
    \includegraphics[width=0.95\linewidth]{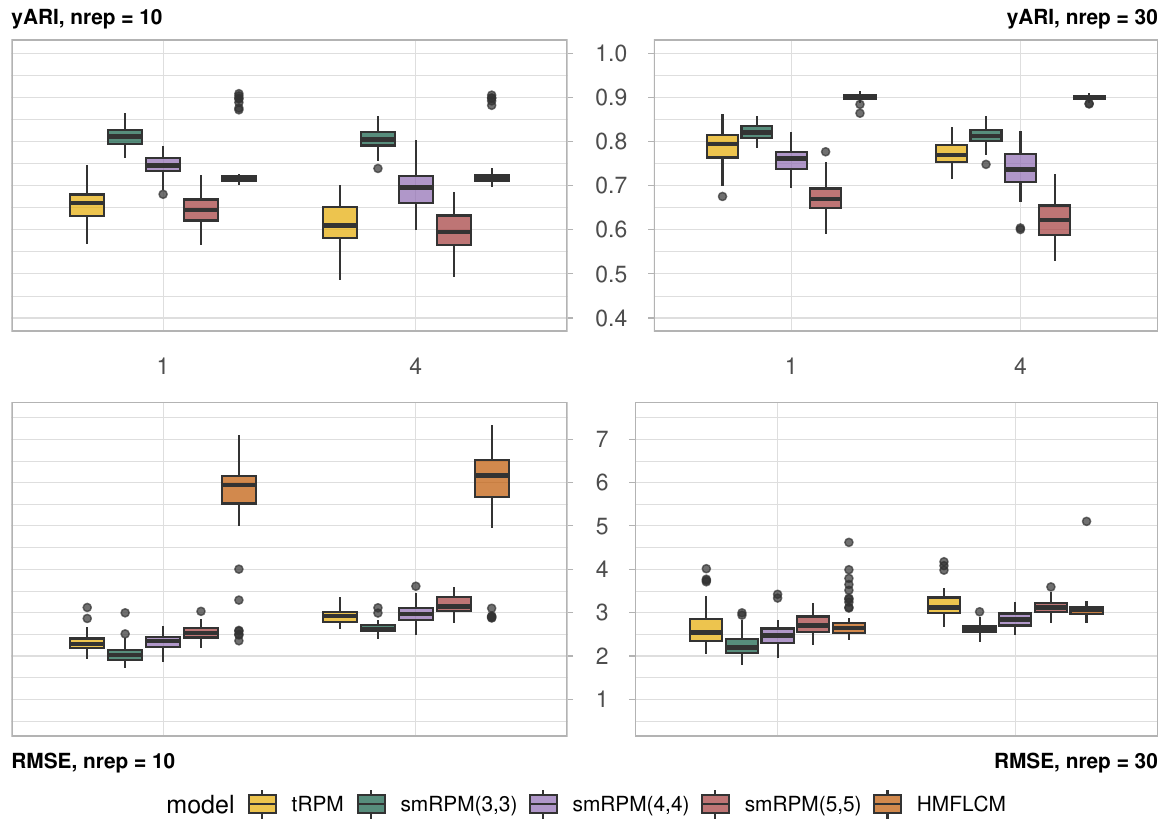}
    \caption{Boxplot of the average posterior fARI across the sequence of functional observations (top) and posterior RMSE between functional observations and predicted counterparts (bottom) computed on $R=50$ simulated dataset under miss-specification of the B-splines. The figures on the left consider $n^{(rep)}=10$, while the ones on the right $n^{(rep)}=30$. Each figure shows the low-noise setting on the left ($\sigma^{2}=1$) on the left and the high-noise one on the right ($\sigma^{2}=4$).}
    \label{fig:boxplot_fARI_RMSE_HMFLCM_s2}
\end{figure}

A similar figure for the correctly specified scenario is reported in Section SM.5 of the Supplementary Materials. In this settings the results in terms of RMSE are qualitatively the same with HMFLCM achieving the worst performance. In terms of fARI, instead, HMFLCM has, on average, higher fARI than the other competing methods.

\section{Venetian lagoon tide level data analysis}
\label{sec:venice}
In this section, we apply our proposed model for functional data to a  dataset of tidal level measurements from monitoring stations in the Venetian Lagoon. 
The Venetian lagoon is an ecosystem in perpetual transformation whose morphology is shaped by the constant interaction between the tides and the erosion caused by waves. Venice, located within this delicate environment, faces significant flood risks that cause disruptions in mobility, difficulties in many public services, issues in numerous ground-floor establishments, and gradual deterioration of building structures.
This issue has intensified due to a combination of natural phenomena, e.g., sea-level rise, and human interventions that have altered the lagoon environment.
Over the years, several measures have been implemented to counter these disruptive transformations, among which the most important is MOSE, a large-scale flood protection system consisting of rows of mobile barriers that, if activated, can stop the rise of the tides by isolating the lagoon from the sea during high tides.
Therefore, effective and continuous monitoring of water levels is crucial for flood management and for operating the MOSE system.

We consider hourly tidal level data, measured at $n=11$ monitoring stations in the Venetian Lagoon, that are publicly available and easily downloadable from the official website of the City of Venice.
In particular, we focus on two time intervals: a shorter one from August 27 to August 30, 2023 ($T=95$ evaluation points), and a longer one from October 19 to October 26, 2023 ($T=192$).
During some of these days, the MOSE system was activated to block rising tides and prevent flooding within the lagoon.
Before applying the model, functional data have been registered, as customary in functional data analysis \citep{ramsay_FunctionalDataAnalysis_2005}, with a simple informed approach. Specifically, we shifted  the evaluation points of certain stations by 0.5, 1.0 or 1.5 hours depending on their distance from the sea, reflecting the time lags in tidal rise reaching different stations.

For both data sets we fit a functional smRPM with cubic B-splines.
We considered 61 evenly spaced knots for the longer interval and 36 for the shorter one, resulting in  $K=63$ and $K=38$ basis functions, respectively. 
Detailed description of the prior hyperparameters is described in Section SM.6 of the Supplementary Materials.
A posterior point estimate for the sequence of partitions $\rho_1, \ldots, \rho_K$ is obtained by independently estimating each partition $\rho_k$ at every index $k$.
Each of these is estimated using the SALSO algorithm \citep{dahl_SearchAlgorithmsLoss_2022} with the Binder loss \citep{binder_BayesianClusterAnalysis_1978}. 
As alternative loss, one could use the Variation of Information \citep[VI][]{meila_ComparingClusteringsInformation_2007,wade_BayesianClusterAnalysis_2018}. 
In this application, despite the former is known for identifying more clusters than the the latter, we obtained the same posterior point estimate using both losses.
Posterior estimates of the cluster-specific B-spline parameters, $\bm{\theta}^*$, are computed conditionally on the posterior point estimate of $\rho_1, \ldots, \rho_K$; see Section SM.4.4 of the Supplementary Materials for the detailed procedure.

We compare the observed time series of tide level with the curves estimated by our model in Figure~\ref{fig:venice_october} and Figure~\ref{fig:venice_august}. Along with the estimated curves, the bottom panel shows the posterior probability of having $J_k$ groups at each index $k$. 

The longer time interval depicted in Figure~\ref{fig:venice_october} covers nine days in October 2023, during which the MOSE system was activated. Specifically, it was operational during the first two days, 19 and 20, as well as the last three days, 24, 25 and 26.
It is interesting to note that all the stations have the same behavior when the MOSE system is not active, while several different local behaviors arise when it is in function. We intentionally did not account for this to showcase the ability of the model to detect both global and local clusters in an unsupervised fashion. 
The large number of clusters on October 24th can be explained by the severe storms on that day, which made the measurements from the stations less precise and more variable from station to station.

\begin{figure}
    \centering
    \includegraphics[width=\linewidth]{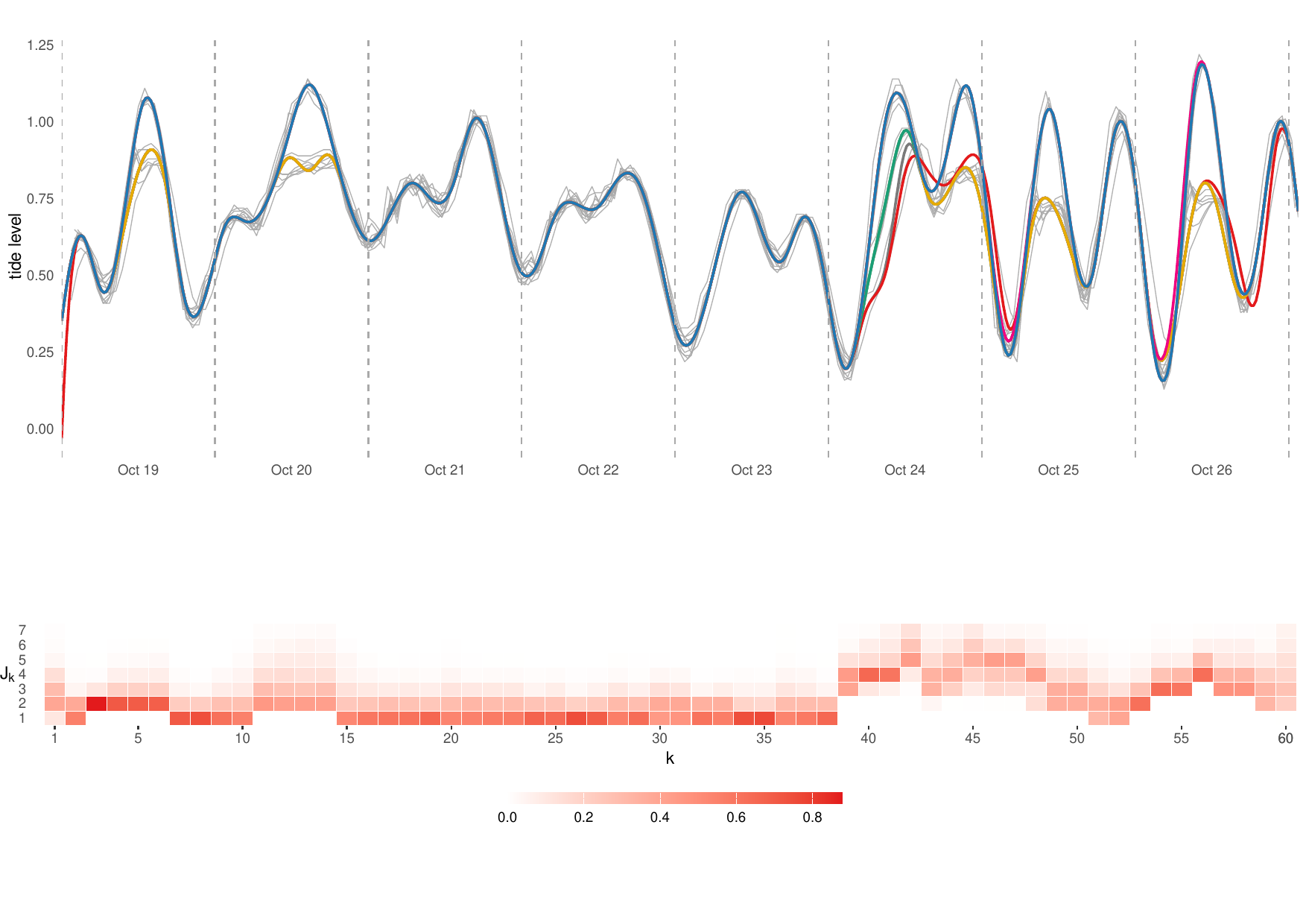}
    \caption{Top: comparison of the observed tide levels (grey) and the corresponding predicted curves for the time interval October 19-26, 2023. Bottom: posterior distributions of the number of clusters $J_k$ at each basis $k$; the posterior probability is represented with a color scale ranging from white for low probability to red for higher probability.}
    \label{fig:venice_october}
\end{figure}

Similarly, the shorter interval depicted in Figure~\ref{fig:venice_august} provides a clearer visualization of how local clustering facilitates a more nuanced analysis of functional data compared to methods relying solely on global clustering. In this case, the persistence of the cluster represented by the red curve, which corresponds to a single station, is particularly notable, as it neither merges with nor splits from any other group throughout the interval. In contrast, the remaining stations are predominantly grouped into the blue and yellow clusters, with the yellow cluster emerging briefly on August 28th and 29th—days marked by exceptionally high tides for the summer period in which the MOSE was activated. The red and the blue/yellow  clusters can be interpreted as representing global behaviors, while still allowing for the identification of local patterns, represented by the splitting of the yellow curve.
Notably, the posterior probability on the number of clusters provides further insights. While the point estimate suggests three clusters, the posterior distribution extends beyond this, capturing the uncertainty in the clustering process. This feature, customary in Bayesian model-based clustering, complements the point estimate by providing a probabilistic assessment of the number of clusters and their stability.

\begin{figure}
    \centering
    \includegraphics[width=\linewidth]{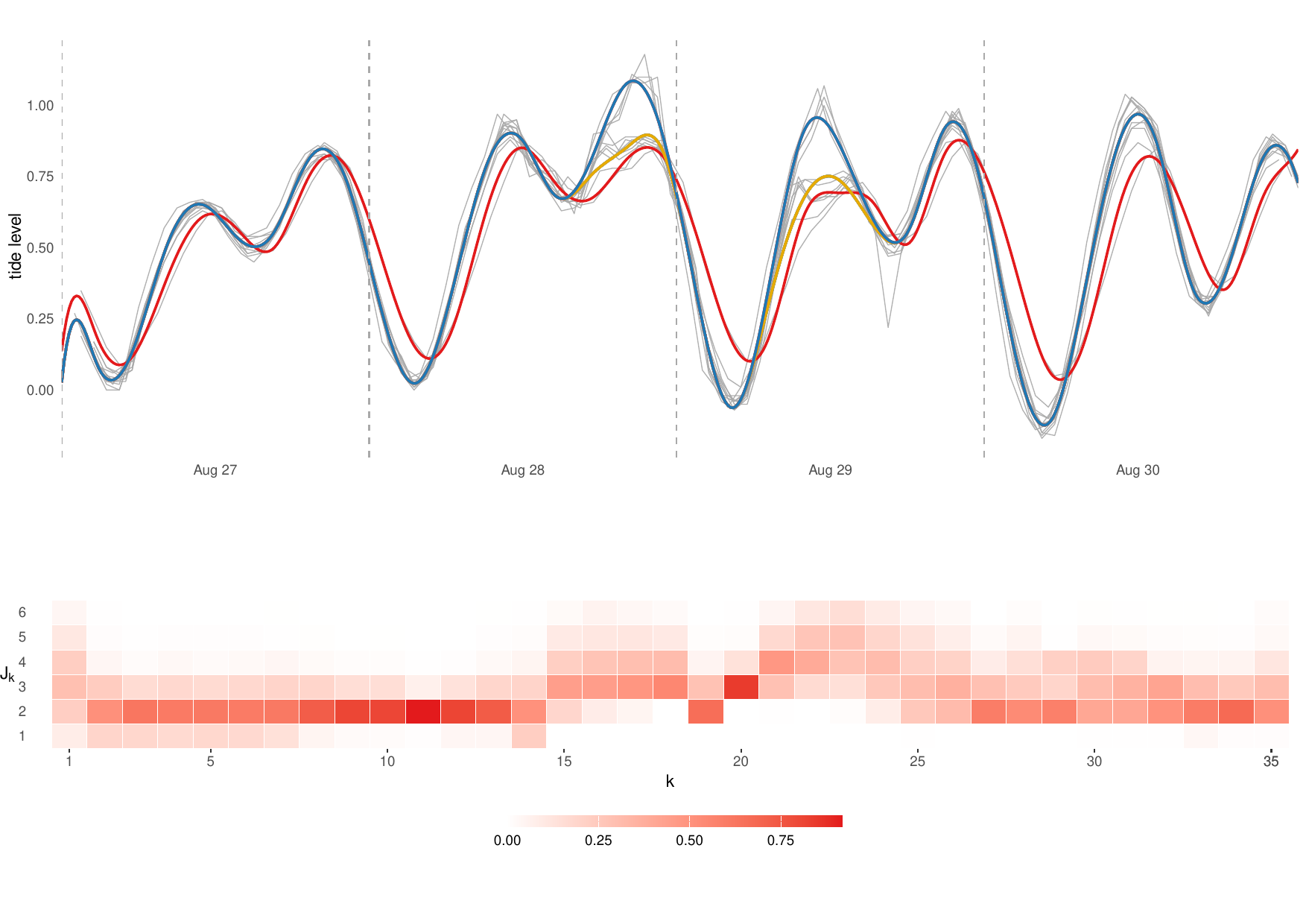}
    \caption{Top: comparison of the observed tide levels (grey) and the corresponding predicted curves for the time interval August 27-30, 2023. Bottom: posterior distributions of the number of clusters $J_k$ at each basis $k$; the posterior probability is represented with a color scale ranging from white for low probability to red for higher probability.}
    \label{fig:venice_august}
\end{figure}

\section{Conclusion}
\label{sec:conclusion}
In this article, we proposed a Bayesian hierarchical approach to perform indirect local clustering of functional data through B-spline basis expansion and a dependent RPM for the basis coefficients. The model exploits the local property of B-splines to obtain partially coincident functions when sharing clusters for enough contiguous basis parameters.
We proposed a novel dependent RPM, smRPM, that relaxes the assumptions of tRPM assuming $d_\rho$- and $d_\gamma$-order dependence in the distribution of the partitions and the distribution of the auxiliary variables guiding the evolution of the partitions, respectively.
Through simulation studies, we showed that the model for functional data is able to fully exploit the local property of B-splines and tends to perform best when both dependence orders are set equal to the degree of the B-spline employed in the model, although performance may vary across scenarios.
The proposed approach offers a flexible alternative to first-order Markovian specifications and can, in principle, be adapted to other settings where higher-order dependence is of interest.

This work can be extended in several directions, both related to the model for functional data and to the novel dependent RPM.
The Gaussian noise in \eqref{eq:model_curves} could be replaced with an alternative formulation which takes into account the dependence across the domain of the functional data, e.g., Gaussian process. 
The relatively simple approach to link contiguous cluster-specific B-spline parameters could be extended to jointly model groups of parameters, possibly borrowing ideas from Bayesian P-spline \citep{lang_BayesianPSplines_2004}.
Moving our attention to smRPM, the distribution of the auxiliary variables could be easily modified to also handle curve-specific, basis-specific and curve-basis-specific covariates in a similar fashion to \cite{liang_BayesianNonparametricApproach_2024}.
Last but not least, alternative EPPF could be employed to add more flexibility to smRPM and eventually include additional prior information on the clustering. \citet{page_DependentModelingTemporal_2022} considered spatial Product Partition Model \citep{page_SpatialProductPartition_2016}, however one can use any other RPM for modeling the first partition and use smRPM mechanism to handle the temporal evolution. Alternatively, any EPPF induced by a element of the class of Gibbs-type priors \citep{deblasi_AreGibbsTypePriors_2015}, among which we recall the Pitman-Yor Process \citep{pitman_TwoparameterPoissonDirichletDistribution_1997}, is a good choice since it allows for a closed form computation of $\pr(\rho_{k'}^{\mathcal{R}_{k'}^{(+i,k)}})/\pr(\rho_{k'}^{\mathcal{R}_{k'}^{(-i,k)}})$, $k'=k,\ldots,k+d_\rho-1$, in \eqref{eq:full_cond_gamma}.

\subsubsection*{Supplementary Materials}
The supplementary material provides details on posterior computation and additional results for the empirical illustrations. The source code for implementation is available at \url{github.com/giovannitoto/smRPM_code}.

\bibliographystyle{agsm}
\bibliography{Bibliography}


\newpage
\begin{center}
{\large\bf SUPPLEMENTARY MATERIAL}
\end{center}

\appendix 

\setcounter{section}{0}
\setcounter{table}{0}
\setcounter{figure}{0}
\setcounter{equation}{0}
\renewcommand{\thefigure}{SM\arabic{figure}}
\renewcommand{\thesection}{SM.\arabic{section}}
\renewcommand{\thetable}{SM\arabic{table}}
\renewcommand{\theequation}{SM\arabic{equation}}

\section{Notation}

In this section we review and summarize the notation and the acronyms adopted in the paper. 

\subsubsection*{Sequence of partitions}

Let $\rho_{1},\ldots,\rho_{K}$ represent a sequence of  evolving partitions with  $\rho_k=\{S_{k1},\ldots,S_{kJ_k}\}\in\mathcal{P}$ denoting the partition at $k$ domain index of the $n$ objects into $J_k$ clusters, $S_{kj}\subseteq\{1,\ldots,n\}$, $k=1,\ldots,K$; let $\mathcal{P}$ denote the set of all possible partitions.

In this paper, we consider the following Random Partition Models (RPMs) as prior distribution for one or more random partitions:

{\scriptsize\vspace{0.5cm}
\begin{tabular}{lcc}
    \hline
    Random Partition Model  &  & Parameters\\
    \hline
    Chinese Restaurant Process & $\CRP(M)$ & concentration parameter $M>0$ \\
    temporal RPM & $\tRPM(\bm{\alpha}, M)$ & $\bm{\alpha}\in(0,1)^K$, concentration parameter $M>0$ \\
    semi-Markovian RPM ($d_\gamma=0$) & $\smRPM_{d_\rho,d_\gamma}(\bm{\alpha},M)$ & $\bm{\alpha}\in(0,1)^K$, concentration parameter $M>0$ \\
    semi-Markovian RPM ($d_\gamma>0$) & $\smRPM_{d_\rho,d_\gamma}(\bm{\alpha},M)$ & $\bm{\alpha}\in\R^2$, concentration parameter $M>0$ \\
    \hline
\end{tabular}\vspace{0.5cm}}

\subsubsection*{Auxiliary variables}
Under smRPM, the evolution of the partitions at different $k=1,\ldots,K$ is guided by auxiliary variables.
For each unit $i$ and domain index $k$, we indroduce a binary latent variable $\gamma_{ik}$ specifying whether the $i$th unit cannot be considered for possible cluster reallocation for the next $d_\rho$ consecutive index transitions, i.e.
\begin{equation*}
    \gamma_{ik} =
    \begin{cases}
        1 & \text{if unit $i$ \emph{cannot} be reallocated in the $d_\rho$  consecutive indices after $k-1$} \\
        0 & \text{otherwise}
    \end{cases}.
\end{equation*}
The $n$ auxiliary variables at index $k$, denoted as $\bm{\gamma}_k=(\gamma_{1k},\ldots,\gamma_{nk})$,  influence the partitions at the next $d_\rho$ indices.
This also implies that $\rho_k$ depends on $\bm{\gamma}_{k-d_\rho+1},\ldots,\bm{\gamma}_{k}$; analogously, the $i$th unit in $\rho_k$, $c_{ik}$, depends on $\gamma_{i,k-d_\rho+1},\ldots,\gamma_{i,k}$.
Inspired by \texttt{R}, we denote the two sequences with $\bm{\gamma}_{k-d_\rho+1:k}$ and $\bm{\gamma}_{i,k-d_\rho+1:k}$.

\subsubsection*{Probability distributions}
We consider the following probability distributions:

{\vspace{0.5cm}
\begin{tabular}{lll}
    \hline
    Name & Notation & Parameters\\
    \hline
    Bernoulli & $\Ber(\pi)$ & probability of success $\pi\in(0,1)$ \\
    Beta & $\Beta(a,b)$ & $a >0$, $b>0$ \\
    Gamma & $\Ga(a,b)$ & shape $a>0$, rate $b>0$ \\
    Inverse Gamma & $\InvGa(a,b)$ & shape $a>0$, rate $b>0$ \\
    Normal & $\Normal(\mu,\sigma^2)$  & mean $\mu\in\R$, variance $\sigma^2>0$ \\
    Pólya-Gamma & $\PG(b,c)$ & $b>0$, $c\in\R$ \\
    \hline
\end{tabular}\vspace{0.5cm}}

\noindent We denote the probability density function (pdf) of a random variable evaluated at $x$ as $\text{Dist}(x;\text{parameters})$. For example, $\Normal(x;\mu,\sigma^2)$ is the pdf of a $\Normal(\mu,\sigma^2)$ evaluated at $x$.

The notation $\pr(\cdot)$ is used as a general notation for the probability distribution of a  generic distribution of a random variable. Let $X$ and $Y$ be random variables, then

{\vspace{0.5cm}
\begin{tabular}{rl}
    $\pr(X)$ & is the probability distribution of $X$, \\
    $\pr(X,Y)$ & is the joint probability distribution of $X$ and $Y$, \\
    $\pr(X\mid Y)$ & is the conditional probability distribution of $X$ given $Y$, \\
    $\pr(X\mid-)$ & is the conditional probability distribution of $X$ given all other variables.
\end{tabular}\vspace{0.5cm}}

\section{Proofs of Section 3}

\begin{proof}[Proof of Proposition \ref{prop:rhok}]
    We want to prove that
    \begin{equation*}
        \frac{\pr(\rho_k)\ind{\rho_k\in \mathcal{P}_{\mathcal{R}_k}}}{\sum_{\lambda'\in \mathcal{P}}\pr(\rho_k=\lambda')\ind{\lambda'\in \mathcal{P}_{\mathcal{R}_k}}} = \frac{\pr(\rho_k)}{\pr(\rho_k^{\mathcal{R}_k})} \ind{\rho_{k-1}^{\mathcal{R}_k} = \rho_k^{\mathcal{R}_k}}.
    \end{equation*}
    Both indicator functions at the numerator check whether $\rho_{k-1}$ and $\rho_k$ are compatible with respect to $\bm{\gamma}_{k-d_\rho+1:k}$, so in practice they are checking the same statement written in different forms:
    \begin{equation*}
        \rho_k\in \mathcal{P}_{\mathcal{R}_k} = \{\rho_k\in \mathcal{P}:\rho_{k-1}^{\mathcal{R}_k}=\rho_{k}^{\mathcal{R}_k}\} \quad\iff\quad \rho_{k-1}^{\mathcal{R}_k} = \rho_k^{\mathcal{R}_k}.
    \end{equation*}
    Now we only need to prove that $\sum_{\lambda'\in \mathcal{P}}\pr(\rho_k=\lambda')\ind{\lambda'\in \mathcal{P}_{\mathcal{R}_k}} = \pr(\rho_k^{\mathcal{R}_k})$:
    \begin{equation*}
        \sum_{\lambda'\in \mathcal{P}}\pr(\rho_k=\lambda')\ind{\lambda'\in \mathcal{P}_{\mathcal{R}_k}} = \sum_{\lambda'\in \mathcal{P}_{\mathcal{R}_k}}\pr(\rho_k=\lambda') = \sum_{c_{ik}:i\notin\mathcal{R}_k}\pr(\rho_k=\mathbf{c}_{\cdot k}) = \pr\left(\rho_k^{\mathcal{R}_k}\right).
    \end{equation*}
\end{proof}

\begin{proof}[Proof of Proposition \ref{prop:Rk}]
    First of all, we recall that $i\notin\mathcal{R}_{k'}$, i.e. , unit $i$ can be reallocated when moving from index $k'-1$ to $k'$, if $\max\left\{\bm{\gamma}_{i,k'-d_\rho+1:k'}\right\}=0$, and $i\in\mathcal{R}_{k'}$, i.e., unit $i$ cannot be reallocated when moving from index $k'-1$ to $k'$, if $\max\left\{\bm{\gamma}_{i,k'-d_\rho+1:k'}\right\}=1$.
    It is worth noting that setting $\gamma_{ik}$ to 1 may or may not add unit $i$ to the set $\mathcal{R}_{k'}$ depending on the other values in $\bm{\gamma}_{i,k'-d_\rho+1:k'}$, that are collected in $\bm{\gamma}_{i,k'-d_\rho+1:k'}^{(-k)}$. Similarly, setting $\gamma_{ik}$ to 0 may or may not remove unit $i$ from the set $\mathcal{R}_{k'}$ depending on the values in $\bm{\gamma}_{i,k'-d_\rho+1:k'}^{(-k)}$. 

    We want to prove
    \begin{equation*}
        \mathcal{R}_{k'}^{(+i,k)} =
        \begin{cases}
            \mathcal{R}_{k'} \cup \{i\} & \text{if $\max\left\{\bm{\gamma}_{i,k'-d_\rho+1:k'}\right\}=0$} \\
            \mathcal{R}_{k'} & \text{otherwise}
        \end{cases}.
    \end{equation*}
    The set $\mathcal{R}_{k'}^{(+i,k)}$ always contains unit $i$ due to $\gamma_{ik}$ set to 1, however $\mathcal{R}_{k'}$ may or may not contain unit $i$ depending on the values in $\bm{\gamma}_{i,k'-d_\rho+1:k'}$.
    We have two cases:
    \begin{enumerate}
        \item if all values in $\bm{\gamma}_{i,k'-d_\rho+1:k'}$ are 0, i.e., $\max\left\{\bm{\gamma}_{i,k'-d_\rho+1:k'}\right\}=0$, then unit $i$ can be reallocated when moving from index $k'-1$ to $k'$ and therefore $i\notin\mathcal{R}_{k'}$;
        \item if there is at least one 1 in $\bm{\gamma}_{i,k'-d_\rho+1:k'}$, i.e., $\max\left\{\bm{\gamma}_{i,k'-d_\rho+1:k'}^{(-k)}\right\}=1$, then unit $i$ cannot be reallocated when moving from index $k'-1$ to $k'$ and therefore $i\in\mathcal{R}_{k'}$.
    \end{enumerate}
    So, setting $\gamma_{ik}$ to 1 in case 1. adds unit $i$ to the set $\mathcal{R}_{k'}$, leading to $\mathcal{R}_{k'}^{(+i,k)}=\mathcal{R}_{k'} \cup \{i\}$ if $\max\left\{\bm{\gamma}_{i,k'-d_\rho+1:k'}\right\}=0$; while setting $\gamma_{ik}$ to 1 leaves the set unchanged in case 2. since the set already contains unit $i$, leading to $\mathcal{R}_{k'}^{(+i,k)} =\mathcal{R}_{k'}$ if $\max\left\{\bm{\gamma}_{i,k'-d_\rho+1:k'}\right\}=1$.

    We want to prove
    \begin{equation*}
        \mathcal{R}_{k'}^{(-i,k)} =
        \begin{cases}
            \mathcal{R}_{k'} \setminus \{i\} & \text{if $\max\left\{\bm{\gamma}_{i,k'-d_\rho+1:k'}^{(-k)}\right\}=0$ and $\gamma_{ik}=1$ in $\mathcal{R}_{k'}$} \\
            \mathcal{R}_{k'} & \text{otherwise}
        \end{cases}.
    \end{equation*}
    The set $\mathcal{R}_{k'}$ contains unit $i$ if there is at least one 1 in $\bm{\gamma}_{i,k'-d_\rho+1:k'}$; analogously, the set $\mathcal{R}_{k'}^{(-i,k)}$ contains unit $i$ if there is at least one 1 in $\bm{\gamma}_{i,k'-d_\rho+1:k'}^{(-k)}$. As before, we have three cases:
    \begin{enumerate}
        \item if all values in $\bm{\gamma}_{i,k'-d_\rho+1:k'}^{(-k)}$ are 0, i.e., $\max\left\{\bm{\gamma}_{i,k'-d_\rho+1:k'}^{(-k)}\right\}=0$, and $\gamma_{ik}=1$ in $\mathcal{R}_{k'}$, then unit $i$ cannot be reallocated when moving from index $k'-1$ to $k'$, i.e., $i\in\mathcal{R}_{k'}^{(-i,k)}$, only because of $\gamma_{ik}$ being equal to 1;
        \item if there is at least one 1 in $\bm{\gamma}_{i,k'-d_\rho+1:k'}^{(-k)}$, then unit $i$ cannot be reallocated when moving from index $k'-1$ to $k'$, and therefore $i\in\mathcal{R}_{k'}^{(-i,k)}$, regardless of the value assumed by $\gamma_{ik}$;
        \item if all values in $\bm{\gamma}_{i,k'-d_\rho+1:k'}$ are zero, setting $\gamma_{ik}$ to 0 does not change the sets since $\gamma_{ik}$ is already 0.
    \end{enumerate}
    So, setting $\gamma_{ik}$ to 0 in case 1. removes unit $i$ from the set $\mathcal{R}_{k'}$, leading to $\mathcal{R}_{k'}^{(-i,k)} = \mathcal{R}_{k'} \setminus \{i\}$ if $\max\left\{\bm{\gamma}_{i,k'-d_\rho+1:k'}^{(-k)}\right\}=0$ and $\gamma_{ik}=1$ in $\mathcal{R}_{k'}$; while setting $\gamma_{ik}$ to 0 in the other two cases leaves the set unchanged since both sets contains unit $i$ due to $\bm{\gamma}_{i,k'-d_\rho+1:k'}^{(-k)}$ containing at least one 1 in case 2. and both sets do not contain unit $i$ due to the absence of 1 in $\bm{\gamma}_{i,k'-d_\rho+1:k'}$ in case 3..
    
    We want to prove that $\mathcal{R}_{k'}^{(+i,k)}=\mathcal{R}_{k'}^{(-i,k)}\cup\{i\}$ if $\max\left\{\bm{\gamma}_{i,k'-d_\rho+1:k'}^{(-k)}\right\}=0$. Even if it is not part of the proposition, we also prove that
    \begin{equation*}
        \frac{\pr\left(\rho_{k'}^{\mathcal{R}_{k'}^{(+i,k)}}\right)}{\pr\left(\rho_{k'}^{\mathcal{R}_{k'}^{(-i,k)}}\right)} =
        \begin{cases}
            \pr\left(c_{ik}\mid \rho_{k'}^{\mathcal{R}_{k'}^{(-i,k)}}\right) & \text{if $\mathcal{R}_{k'}^{(+i,k)}=\mathcal{R}_{k'}^{(-i,k)}\cup\{i\}$} \\
            1 & \text{otherwise}
        \end{cases}.
    \end{equation*}
    The equality $\mathcal{R}_{k'}^{(+i,k)}=\mathcal{R}_{k'}^{(-i,k)}\cup\{i\}$ holds only when $\mathcal{R}_{k'}^{(-i,k)}$ does not contain unit $i$, that is when all values in $\bm{\gamma}_{i,k'-d_\rho+1:k'}^{(-k)}$ are 0, i.e., $\max\left\{\bm{\gamma}_{i,k'-d_\rho+1:k'}^{(-k)}\right\}=0$.  
    If $\mathcal{R}_{k'}^{(+i,k)}=\mathcal{R}_{k'}^{(-i,k)}\cup\{i\}$ holds, then the partition $\rho_{k'}^{\mathcal{R}_{k'}^{(+i,k)}}$ is the partition $\rho_{k'}^{\mathcal{R}_{k'}^{(-i,k)}}$ with an additional local cluster $c_{ik}$ and, consequently,
    \begin{equation*}
        \frac{\pr\left(\rho_{k'}^{\mathcal{R}_{k'}^{(+i,k)}}\right)}{\pr\left(\rho_{k'}^{\mathcal{R}_{k'}^{(-i,k)}}\right)} =
        \frac{\pr\left(c_{ik},\rho_{k'}^{\mathcal{R}_{k'}^{(-i,k)}}\right)}{\pr\left(\rho_{k'}^{\mathcal{R}_{k'}^{(-i,k)}}\right)} =
        \frac{\pr\left(c_{ik}\mid\rho_{k'}^{\mathcal{R}_{k'}^{(-i,k)}}\right)\pr\left(\rho_{k'}^{\mathcal{R}_{k'}^{(-i,k)}}\right)}{\pr\left(\rho_{k'}^{\mathcal{R}_{k'}^{(-i,k)}}\right)} = \pr\left(c_{ik}\mid\rho_{k'}^{\mathcal{R}_{k'}^{(-i,k)}}\right).
    \end{equation*}
    Obviously, $\pr(\rho_{k'}^{\mathcal{R}_{k'}^{(+i,k)}})=\pr(\rho_{k'}^{\mathcal{R}_{k'}^{(-i,k)}})$ if $\mathcal{R}_{k'}^{(+i,k)}=\mathcal{R}_{k'}^{(-i,k)}$ since we are evaluating the EPPF at the same partition.
\end{proof}

\section{Models employing smRPM} \label{SM-section:model_based_on_smRPM}
We propose two models employing semi-Markovian Random Partition Model (smRPM). We consider a model for non-functional data and one for functional data to show that the usage of smRPM is not limited to our motivating FDA problem, but it can be employed in any context in which we are interested in the direct modeling of a sequence of dependent partition.
Before that, we recall the joint probabilistic structure of smRPM and the prior distribution for $\bm{\alpha}$ when $d_\gamma=0$ and $d_\gamma>0$:
\begin{itemize}
    \item smRPM with $d_\gamma=0$:
    \begin{equation*}
    \begin{split}
        \rho_1 &\sim \CRP(M), \\
        \pr(\rho_k=\lambda \mid \bm{\gamma}_{k-d_\rho+1:k}, \rho_{k-1}) &= \frac{\pr(\rho_k=\lambda)\ind{\lambda\in P_{\mathcal{R}_k}}}{\sum_{\lambda'\in P}\pr(\rho_k=\lambda')\ind{\lambda'\in P_{\mathcal{R}_k}}}, \\
        \gamma_{ik} &\indsim \Ber(\alpha_k), \\
        \alpha_k &\iidsim \Beta(a_\alpha,b_\alpha).
    \end{split}
    \end{equation*}
    \item smRPM with $d_\gamma>0$:
    \begin{equation*}
    \begin{split}
        \rho_1 &\sim \CRP(M), \\
        \pr(\rho_k=\lambda \mid \bm{\gamma}_{k-d_\rho+1:k}, \rho_{k-1}) &= \frac{\pr(\rho_k=\lambda)\ind{\lambda\in P_{\mathcal{R}_k}}}{\sum_{\lambda'\in P}\pr(\rho_k=\lambda')\ind{\lambda'\in P_{\mathcal{R}_k}}}, \\
        \gamma_{ik} \mid \bm{\alpha}, \bm{\gamma}_{i,k-d_\gamma:k-1} &\indsim \Ber\left(\pi\left(\bm{\alpha}^\top\mathbf{z}_{ik}\right)\right), \\
        \bm{\alpha} &\sim \Normal(\mathbf{a},\mathbf{A}), \\
        \omega_{ik} &\iidsim \PG(0,1).
    \end{split}
    \end{equation*}
\end{itemize}
In the next section, we will just write $\mathbf{C} \mid \bm{\alpha} \sim \smRPM_{d_\rho,d_\gamma}(\bm{\alpha},M)$ without specifying the prior distribution for $\bm{\alpha}$ since it changes depending on $d_\gamma$.

\subsection{Time series data model}
Before moving to the more complex case with functional data, we consider a more intuitive case in which $n$ time series with $K$ observations each are modeled. In particular, we consider a hierarchical model, similar to the one proposed in \citet{page_DependentModelingTemporal_2022}, where time appears in the partition model only. The main difference between the two models are the use of smRPM instead of tRPM, and the replacement of the uniform prior distributions with gamma distributions. The last change was made to remove the Metropolis steps in the Gibbs sampler.
Using cluster label notation, the model for non-functional data is
\begin{equation} \label{SM-eq:model_nonfunctional}
\begin{split}
    Y_{ik} &\mid \bm{\mu}^*_k, \bm{\sigma}^{*2}_k, c_{ik} \indsim \Normal(\mu^*_{k,c_{ik}},\sigma^{*2}_{k,c_{ik}}), \quad i=1,\ldots,n, \quad k=1,\ldots,K \\
    \mu^*_{kj} &\mid \theta_k, \tau^2_k \indsim \Normal(\theta_k,\tau^2_k), \quad
    \sigma^{*2}_{kj} \iidsim \InvGa(a_\sigma,b_\sigma), \quad j=1,\ldots,J_k,\quad k=1,\ldots,K \\
    \theta_k &\mid \phi_0,\lambda^2 \indsim \Normal(\phi_0,\lambda^2), \quad
    \tau^2_k \iidsim \InvGa(a_\tau,b_\tau), \quad k=1,\ldots,K \\
    \phi_0 &\sim \Normal(m_0,s_0^2), \quad
    \lambda^2 \sim \InvGa(a_\lambda,b_\lambda), \\
    \mathbf{C} &\mid \bm{\alpha} \sim \smRPM_{d_\rho,d_\gamma}(\bm{\alpha},M).
\end{split} 
\end{equation}
where $\bm{\mu}^*_k=(\mu^*_{k1},\ldots,\mu^*_{kJ_k})^\top\in\R^{J_k}$ and $\bm{\sigma}^{*2}_k=(\sigma^{*2}_{k1},\ldots,\sigma^{*2}_{kJ_k})\in(0,\infty)^{J_k}$.
Posterior inference is performed via MCMC. The full conditional distributions are reported in the next section.

\subsection{Functional data model}
Using cluster label notation, the model for functional data described in the main article is 
\begin{equation} \label{SM-eq:model_functional}
\begin{split}
    Y_i(x_{it}) \mid \bm{\theta}^*,\mathbf{c}_{i},\sigma^2 &\indsim \Normal(\mathbf{b}(x_{it})^\top\bm{\theta}_i,\sigma^2), \quad t=1,\ldots,T_i, \quad i=1,\ldots,n, \\
    \theta^*_{kj} \mid \bm{\theta}^*_{k-1},\mathbf{c}_{\cdot,k},\mathbf{c}_{\cdot,k-1},\phi,\tau^2 &\indsim \Normal\left(\frac{\phi}{|C_{k-1}^{(\to j)}|}\sum_{l\in C_{k-1}^{(\to j)}}\theta^*_{k-1,l},\tau^2\right), \quad j=1,\ldots, J_k, \quad k=2,\ldots,K, \\
    \theta^*_{1j} \mid \tau^2 &\indsim \Normal(0, \tau^2), \quad j=1,\ldots, J_1, \\
    \phi &\sim N(m_0,s^2_0), \\
    \tau^2 &\sim \InvGa(a_\tau,b_\tau), \\
    \sigma^2 &\sim \InvGa(a_\sigma,b_\sigma), \\
    \mathbf{C} \mid \bm{\alpha} &\sim \smRPM_{d_\rho,d_\gamma}(\bm{\alpha},M).
\end{split} 
\end{equation}
Posterior inference is performed via MCMC. The full conditional distributions are reported in the next section.

\section{Details on posterior inference via MCMC}
In this section, we report the computations to derive the full conditional distributions for $\gamma_{ik}$ and $c_{ik}$, $i=1,\ldots,n$, $k=1,\ldots,K$, under smRPM, and report the full conditional distributions for all latent quantities in the model for time series and the model for functional data.

\subsection{smRPM}
To derive the full conditional distribution for $\gamma_{ik}$ and $c_{ik}$, $i=1,\ldots,n$, $k=1,\ldots,K$, we focus on the joint conditional distribution of $\bm{\gamma}$, $\mathbf{C}$, the random variables $\mathbf{Q}$ whose distributions depend on $\mathbf{C}$, $\bm{\alpha}$, and also $\bm{\omega}$ when $d_\gamma>0$. Notice that we will not specify $\mathbf{Q}$ and its distribution in this section since here we are focusing on smRPM only, but $\mathbf{Q}$ depends on the model employing smRPM.
Recalling $\pr(\rho_k) = \pr(c_{1k},\ldots,c_{nk})$, and analogously $\pr(\rho_1,\ldots,\rho_K) = \pr(\mathbf{C})$, in what follows we will use
$\mathbf{C}$ to refer to the sequence of partitions, $\rho_k$ to refer to a single partition and $c_{ik}$ to refer to the cluster allocation of a specific unit at time $k$.

When $d_\gamma=0$, the joint distribution is
\small
\begin{align*}
    \pr(\mathbf{Q},\bm{\gamma},\mathbf{C},\bm{\alpha}) =& \pr(\mathbf{Q}\mid\mathbf{C},\ldots)  \pr(\mathbf{C},\bm{\gamma}\mid\bm{\alpha},\bm{\omega}) \pr(\bm{\alpha}) \\
    =& \pr(\mathbf{Q}\mid\mathbf{C},\ldots) \pr(\rho_1,\ldots,\bm{\gamma}_K,\rho_K\mid\bm{\alpha}) \pr(\bm{\alpha}) \\
    =& \pr(\mathbf{Q}\mid\mathbf{C},\ldots) \pr(\rho_1) \prod_{k=2}^K \left\{ \pr(\rho_k \mid \bm{\gamma}_{k-d_\rho+1:k}, \rho_{k-1}) \pr(\bm{\gamma}_k\mid \alpha_k) \pr(\alpha_k)\right\} \\
    =& \pr(\mathbf{Q}\mid\mathbf{C},\ldots) \pr(\rho_1) \prod_{k=2}^K \left\{ \pr(\rho_k \mid \bm{\gamma}_{k-d_\rho+1:k}, \rho_{k-1}) \prod_{i=1}^n \pr(\gamma_{ik} \mid \alpha_k) \pr(\alpha_k) \right\}.
\end{align*}
\normalsize
while, when $d_\gamma>0$, the joint distribution is
\small
\begin{align*}
    \pr(\mathbf{Q},\bm{\gamma},\mathbf{C},\bm{\alpha},\bm{\omega}) =& \pr(\mathbf{Q}\mid\mathbf{C},\ldots)  \pr(\mathbf{C},\bm{\gamma}\mid\bm{\alpha},\bm{\omega}) \pr(\bm{\alpha})\pr(\bm{\omega}) \\
    =& \pr(\mathbf{Q}\mid\mathbf{C},\ldots) \pr(\rho_1,\ldots,\bm{\gamma}_K,\rho_K\mid\bm{\alpha},\bm{\omega}) \pr(\bm{\alpha})\pr(\bm{\omega}) \\
    =& \pr(\mathbf{Q}\mid\mathbf{C},\ldots) \pr(\rho_1) \prod_{k=2}^K \left\{ \pr(\rho_k \mid \bm{\gamma}_{k-d_\rho+1:k}, \rho_{k-1}) \pr(\bm{\gamma}_k\mid\bm{\alpha},\bm{\gamma}_{i,k-d_\gamma:k-1}) \right\} \pr(\bm{\alpha})\pr(\bm{\omega}) \\
    =& \pr(\mathbf{Q}\mid\mathbf{C},\ldots) \pr(\rho_1) \prod_{k=2}^K \left\{ \pr(\rho_k \mid \bm{\gamma}_{k-d_\rho+1:k}, \rho_{k-1}) \prod_{i=1}^n \pr(\gamma_{ik} \mid \bm{\alpha}, \bm{\gamma}_{i,k-d_\gamma:k-1}) \right\} \\
    &\times \pr(\bm{\alpha}) \prod_{i=1}^n\prod_{k=1}^K \pr(\omega_{ik}) .
\end{align*}
\normalsize
Notice that the full conditional for $\gamma_{ik}$ is influenced by the choice of $d_\gamma$ since the distribution of $\bm{\gamma}$ depends on $\bm{\alpha}$, and eventually $\bm{\omega}$; on the other hand the full conditional for $c_{ik}$ remains the same for any $d_\gamma\geq0$ since the distribution of $\rho_k$ does not depend on $\bm{\alpha}$, and eventually $\bm{\omega}$.

\subsubsection*{Full conditional distribution for $\gamma_{ik}$}
First of all, we show that the full conditional for $\gamma_{ik}$ under $d_\gamma=0$ and the one under $d_\gamma>0$ can be reduced to a single one consisting of a product of a probability, or probabilities, and $d_\rho$ indicator functions.
When $d_\gamma=0$, we have
\begin{align*}
    \pr(\gamma_{ik}\mid-) &\propto \pr(\gamma_{ik} \mid \alpha_k) \times \prod_{k'=k}^{k+d_\rho-1}\pr(\rho_{k'} \mid \bm{\gamma}_{k'-d_\rho+1:k'}, \rho_{k'-1}) \\
    &= \underbrace{ \alpha_k^{\gamma_{ik}}(1-\alpha_k)^{1-\gamma_{ik}} }_{\substack{= \pi_{ik}}} \times \underbrace{ \prod_{k'=k}^{k+d_\rho-1} \frac{\pr(\rho_{k'})}{\pr(\rho_{k'}^{\mathcal{R}_{k'}})} \ind{\rho_{k'-1}^{\mathcal{R}_{k'}}=\rho_{k'}^{\mathcal{R}_{k'}}} }_{\substack{\text{product of indicator functions}}},
\end{align*}
while, when $d_\gamma>0$, we have
\begin{align*}
    \pr(\gamma_{ik}\mid-) &\propto \prod_{k'=k}^{k+d_\gamma}\pr(\gamma_{ik} \mid \bm{\alpha}, \bm{\gamma}_{i,k-d_\gamma:k-1}) \times \prod_{k'=k}^{k+d_\rho-1}\pr(\rho_{k'} \mid \bm{\gamma}_{k'-d_\rho+1:k'}, \rho_{k'-1}) \\
    &= \underbrace{ \prod_{k'=k}^{k+d_\gamma} \frac{\exp(\gamma_{ik'}\bm{\alpha}^\top\mathbf{z}_{ik'})}{1+\exp(\bm{\alpha}^\top\mathbf{z}_{ik'})} }_{\substack{= \pi_{ik}}}
    \times \underbrace{ \prod_{k'=k}^{k+d_\rho-1} \frac{\pr(\rho_{k'})}{\pr(\rho_{k'}^{\mathcal{R}_{k'}})} \ind{\rho_{k'-1}^{\mathcal{R}_{k'}}=\rho_{k'}^{\mathcal{R}_{k'}}} }_{\substack{\text{product of indicator functions}}}.
\end{align*}
Hence, they both have the common formulation
\begin{equation*}
    \pr(\gamma_{ik}\mid-) \propto \pi_{ik} \times \prod_{k'=k}^{k+d_\rho-1} \frac{\pr(\rho_{k'})}{\pr(\rho_{k'}^{\mathcal{R}_{k'}})} \ind{\rho_{k'-1}^{\mathcal{R}_{k'}}=\rho_{k'}^{\mathcal{R}_{k'}}},
\end{equation*}
where
\begin{equation*}
    \pi_{ik} =
    \begin{cases}
        \alpha_k^{\gamma_{ik}}(1-\alpha_k)^{1-\gamma_{ik}} & \text{if $d_\gamma=0$} \\
        \prod_{k'=k}^{k+d_\gamma} \frac{\exp(\gamma_{ik'}\bm{\alpha}^\top\mathbf{z}_{ik'})}{1+\exp(\bm{\alpha}^\top\mathbf{z}_{ik'})} & \text{if $d_\gamma>0$}
    \end{cases}.
\end{equation*}
The full conditional distribution for $\gamma_{ik}=1$ is
\begin{equation*}
    \pr(\gamma_{ik}=1\mid-) \propto \pi_{ik}^{(+i,k)} \prod_{k'=k}^{k+d_\rho-1} \frac{\pr(\rho_{k'})}{\pr(\rho_{k'}^{\mathcal{R}_{k'}^{(+i,k)}})} \ind{\rho_{k'-1}^{\mathcal{R}_{k'}^{(+i,k)}}=\rho_{k'}^{\mathcal{R}_{k'}^{(+i,k)}}}
\end{equation*}
and, similarly, the full conditional distribution for $\gamma_{ik}=0$ is
\begin{equation*}
    \pr(\gamma_{ik}=0\mid-) \propto \pi_{ik}^{(-i,k)} \prod_{k'=k}^{k+d_\rho-1} \frac{\pr(\rho_{k'})}{\pr(\rho_{k'}^{\mathcal{R}_{k'}^{(-i,k)}})} \ind{\rho_{k'-1}^{\mathcal{R}_{k'}^{(-i,k)}}=\rho_{k'}^{\mathcal{R}_{k'}^{(-i,k)}}}.
\end{equation*}
where $\pi_{ik}^{(+i,k)}=\alpha_k$ and $\pi_{ik}^{(-i,k)}=1-\alpha_k$ when $d_\gamma=0$, while $\pi_{ik}^{(+i,k)}$ and $\pi_{ik}^{(-i,k)}$ are the product of $d_\gamma+1$ probabilities $\prod_{k'=k}^{k+d_\gamma} \exp(\gamma_{ik'}\bm{\alpha}^\top\mathbf{z}_{ik'}) / \{1+\exp(\bm{\alpha}^\top\mathbf{z}_{ik'})\}$ with $\gamma_{ik}$ set to 1 and 0, respectively, when $d_\gamma>0$. 
Normalizing, we obtain
\begin{equation} \label{SM-eq:full_cond_gamma}
    \pr(\gamma_{ik}=1\mid-) = \frac{\pi_{ik}^{(+i,k)}}{\pi_{ik}^{(+i,k)} + \pi_{ik}^{(-i,k)} \prod_{k'=k}^{k+d_\rho-1} \pr(\rho_{k'}^{\mathcal{R}_{k'}^{(+i,k)}})/\pr(\rho_{k'}^{\mathcal{R}_{k'}^{(-i,k)}})} \prod_{k'=k}^{k+d_\rho-1} \ind{\rho_{k'-1}^{\mathcal{R}_{k'}^{(+i,k)}}=\rho_{k'}^{\mathcal{R}_{k'}^{(+i,k)}}},
\end{equation}
where $\pr(\rho_{k'}^{\mathcal{R}_{k'}^{(+i,k)}})/\pr(\rho_{k'}^{\mathcal{R}_{k'}^{(-i,k)}})$ can be easily computed using Proposition~\ref{prop:Rk} and Neal's Algorithm 8 with one auxiliary parameter \citep{neal_MarkovChainSampling_2000}. As a consequence of Proposition~\ref{prop:Rk}, $\pr(\rho_{k'}^{\mathcal{R}_{k'}^{(+i,k)}})/\pr(\rho_{k'}^{\mathcal{R}_{k'}^{(-i,k)}})$ needs to be computed only when updating $\gamma_{ik}$ changes the set of compatible partitions for $\rho_{k}$ since
\begin{equation*}
    \frac{\pr\left(\rho_{k'}^{\mathcal{R}_{k'}^{(+i,k)}}\right)}{\pr\left(\rho_{k'}^{\mathcal{R}_{k'}^{(-i,k)}}\right)} =
    \begin{cases}
        \pr\left(c_{ik}\mid \rho_{k'}^{\mathcal{R}_{k'}^{(-i,k)}}\right) & \text{if $\mathcal{R}_{k'}^{(+i,k)}=\mathcal{R}_{k'}^{(-i,k)}\cup\{i\}$} \\
        1 & \text{otherwise}
    \end{cases}.
\end{equation*}
Assuming $\rho_1\sim\CRP(M)$, when $\mathcal{R}_{k'}^{(+i,k)}=\mathcal{R}_{k'}^{(-i,k)}\cup\{i\}$, that is when $c_{ik}\notin\rho_{k'}^{\mathcal{R}_{k'}^{(-i,k)}}$, we have
\begin{equation} \label{SM-eq:cik}
    \pr\left(c_{ik}=j\mid \rho_{k'}^{\mathcal{R}_{k'}^{(-i,k)}}\right) =
    \begin{cases}
        n_j(\rho_{k'}^{\mathcal{R}_{k'}^{(-i,k)}}) \Big/ \left(n(\rho_{k'}^{\mathcal{R}_{k'}^{(-i,k)}}) + M\right)  & \text{if $j=1,\ldots,J(\rho_{k'}^{\mathcal{R}_{k'}^{(-i,k)}})$} \\
        M \Big/ \left(n(\rho_{k'}^{\mathcal{R}_{k'}^{(-i,k)}}) + M\right) & \text{if $j=J(\rho_{k'}^{\mathcal{R}_{k'}^{(-i,k)}})+1$}
    \end{cases},
\end{equation}
where $n_j(\rho_{k'}^{\mathcal{R}_{k'}^{(-i,k)}})$ is the frequency of cluster $j$ in $\rho_{k'}^{\mathcal{R}_{k'}^{(-i,k)}}$, $n(\rho_{k'}^{\mathcal{R}_{k'}^{(-i,k)}})$ is the number of units in $\rho_{k'}^{\mathcal{R}_{k'}^{(-i,k)}}$ and $J(\rho_{k'}^{\mathcal{R}_{k'}^{(-i,k)}})$ is the number of clusters in ${\mathcal{R}_{k'}^{(-i,k)}}$.

\subsubsection*{Full conditional distribution for $c_{ik}$}
We use again Proposition~\ref{prop:rhok} and Neal's Algorithm 8 with one auxiliary parameter to derive the full conditional distribution for $c_{ik}$, which is
\footnotesize
\begin{equation*}
\begin{split}
    \pr(c_{ik}=j\mid-) \propto& \pr(\rho_k^{(i\to j)} \mid \bm{\gamma}_{k-d_\rho+1:k}, \rho_{k-1}) \pr(\rho_{k+1} \mid \bm{\gamma}_{k-d_\rho+2:k+1}, \rho_k^{(i\to j)}) \pr(\mathbf{Q}\mid c_{ik}=j,\ldots) \\
    =& \frac{\pr(\rho_k^{(i\to j)})}{\pr((\rho_k^{(i\to j)})^{\mathcal{R}_k})} \ind{\rho_{k-1}^{\mathcal{R}_k} = (\rho_k^{(i\to j)})^{\mathcal{R}_k}} \frac{\pr(\rho_{k+1})}{\pr(\rho_{k+1}^{\mathcal{R}_{k+1}})} \ind{(\rho_k^{(i\to j)})^{\mathcal{R}_{k+1}} = \rho_{k+1}^{\mathcal{R}_{k+1}}} \pr(\mathbf{Q}\mid c_{ik}=j,\ldots) \\
    \propto& \pr(\rho_k^{(i\to j)}) \pr(\mathbf{Q}\mid c_{ik}=j,\ldots) \ind{\rho_{k-1}^{\mathcal{R}_k} = (\rho_k^{(i\to j)})^{\mathcal{R}_k}} \ind{(\rho_k^{(i\to j)})^{\mathcal{R}_{k+1}} = \rho_{k+1}^{\mathcal{R}_{k+1}}},
\end{split}
\end{equation*}
\normalsize
where $\pr(\mathbf{Q}\mid c_{ik}=j,\ldots)$ is the joint distribution of the random variables whose distributions depend on $c_{ik}$ with $c_{ik}$ set to $j$.
The quantity $\rho_k^{(i\to j)} = \left\{S^{(-i)}_{k1},\ldots,S^{(-i)}_{kj}\cup\{i\},\ldots,S^{(-i)}_{kJ(\rho_k^{(i\to j)})}\right\}$ denotes the partition $\rho_k$ with unit $i$ assigned to cluster $j$, where $J(\rho_k^{(i\to j)})$ is the number of clusters in $\rho_k^{(i\to j)}$ and $S^{(-i)}_{kj}$ be the $j$th cluster at index $k$, $S_{kj}$, with unit $i$ removed. Assuming again $\rho_1\sim\CRP(M)$,
\begin{equation*}
    \pr\left(\rho_k^{(i\to j)}\right) \propto
    \begin{cases}
        M \Gamma(|S^{(-i)}_{kj}\cup\{i\}|) \prod_{j'\neq j}^{J(\rho_k^{(i\to j)})} M \Gamma(|S^{(-i)}_{kj'}|) & \text{if $j=1,\ldots,J(\rho_k^{(i\to j)})$} \\
        M \Gamma(|\{i\}|) \prod_{j'=1}^{J(\rho_k^{(i\to j)})} M \Gamma(|S^{(-i)}_{kj'}|) & \text{if $j=J(\rho_k^{(i\to j)})+1$}
    \end{cases}.
\end{equation*}
Recalling that $\Gamma(n+1)=n\Gamma(n)$ and $\Gamma(n)=(n-1)!$ for $n\in\N$, the last equation can be simplified into
\begin{equation*}
    \pr\left(\rho_k^{(i\to j)}\right) \propto
    \begin{cases}
        |S^{(-i)}_{kj}| & \text{if $j=1,\ldots,J(\rho_k^{(i\to j)})$} \\
        M & \text{if $j=J(\rho_k^{(i\to j)})+1$}
    \end{cases}.
\end{equation*}
The first indicator function, $\ind{\rho_{k-1}^{\mathcal{R}_k} = (\rho_k^{(i\to j)})^{\mathcal{R}_k}}$, implies that the local cluster $c_{ik}$ is updated only if the local cluster can be reallocated when moving from index $k-1$ to $k$ since the indicator function sets $\Pr(c_{ik}=j\mid-)$ to 0 if the partitions $\rho_{k-1}$ and $\rho_k^{(i\to j)}$ are not compatible with respect to $\bm{\gamma}_{k-d_\rho+1:k}$; the second indicator function, $\ind{(\rho_k^{(i\to j)})^{\mathcal{R}_{k+1}} = \rho_{k+1}^{\mathcal{R}_{k+1}}}$, implies that the local cluster $c_{ik}$ cannot be reallocated to another already-existent cluster if the local cluster cannot be reallocated when moving from index $k$ to $k+1$ since the indicator function sets $\Pr(c_{ik}=j\mid-)$ to 0 if the partitions $\rho_k^{(i\to j)}$ and $\rho_{k+1}$ are not compatible with respect to $\bm{\gamma}_{k-d_\rho+2:k+1}$.
Notice that $\pr(\rho_k^{\mathcal{R}_k})$ can be omitted since it depends on $c_{ik}$ only when unit $i$ cannot be reallocated when moving from index $k-1$ to $k$.

So, the full conditional distribution for $c_{ik}=j$ is
\footnotesize
\begin{equation} \label{SM-eq:full_cond_cik}
    \pr(c_{ik}=j\mid-) \propto
    \begin{cases}
        |S^{-i}_{kj}| \pr(\mathbf{Q}\mid c_{ik}=j,\ldots) \ind{\rho_{k-1}^{\mathcal{R}_k} = (\rho_k^{(i\to j)})^{\mathcal{R}_k}} \ind{(\rho_k^{(i\to j)})^{\mathcal{R}_{k+1}} = \rho_{k+1}^{\mathcal{R}_{k+1}}} & \text{if $j=1,\ldots,J(\rho_k^{(i\to j)})$} \\
        M \pr(\mathbf{Q}\mid c_{ik}=j,\ldots) \ind{\rho_{k-1}^{\mathcal{R}_k} = (\rho_k^{(i\to j)})^{\mathcal{R}_k}} \ind{(\rho_k^{(i\to j)})^{\mathcal{R}_{k+1}} = \rho_{k+1}^{\mathcal{R}_{k+1}}} & \text{if $j=J(\rho_k^{(i\to j)})+1$}
    \end{cases}.
\end{equation}
\normalsize
Any auxiliary parameters, i.e., variables related to the new cluster, are drawn from their prior distributions as in \citet{neal_MarkovChainSampling_2000}, used to compute $\pr(c_{ik}=J(\rho_k^{(i\to j)})+1\mid-)$ and then kept if the new cluster is selected, discarded otherwise.

\subsubsection*{Full conditional distributions for $\bm{\alpha}$ and $\bm{\omega}$}
Finally, we report the full conditional distributions for $\bm{\alpha}$ and $\bm{\omega}$, where $\bm{\omega}$ is a $n\times K$ matrix containing the $\omega_{ik}$'s.
When $d_\gamma=0$, we have only $\bm{\alpha}$, whose elements have the following full conditional distribution
\begin{equation} \label{SM-eq:full_cond_alpha}
    \alpha_k \mid- \sim \Beta\left( a_\alpha+\sum_{i=1}^n\gamma_{ik}, b_\alpha+n-\sum_{i=1}^n\gamma_{ik}\right).
\end{equation}
When $d_\gamma>0$, we have both $\bm{\alpha}$ and $\bm{\omega}$, whose elements have the following full conditional distributions
\begin{equation} \label{SM-eq:full_cond_alpha_omega}
    \bm{\alpha} \mid- \sim \Normal\left( (\mathbf{Z}^\top\Omega\mathbf{Z}+\mathbf{A}^{-1})^{-1}(\mathbf{Z}^\top\kappa+\mathbf{A}^{-1}\mathbf{a}), (\mathbf{Z}^\top\Omega\mathbf{Z}+\mathbf{A}^{-1})^{-1} \right), \quad \omega_{ik} \mid - \sim \PG\left( 1, \bm{\alpha}^\top\mathbf{z}_{ik} \right),
\end{equation}
where $\kappa=\text{vec}(\bm{\gamma})-1/2\in\R^{nK}$ and $\Omega=\text{diag}\{\text{vec}(\bm{\omega})\}\in\R^{nK\times nK}$.

\subsection{Time series data model}
The full conditional distribution for $\gamma_{ik}$ is Equation~\ref{SM-eq:full_cond_gamma}  (Equation~\ref{eq:full_cond_gamma} in the article), while the full conditional probability for $c_{ik}$ reported in Equation~\ref{SM-eq:full_cond_cik} (Equation~\ref{eq:full_cond_cik} in the article) assumes the following form
\footnotesize
\begin{equation*}
    \pr(c_{ik}=j\mid-) \propto
    \begin{cases}
        |S^{(-i)}_{kj}| \Normal(y_{ik}; \mu^*_{kj}, \sigma^{*2}_{kj}) \ind{\rho_{k-1}^{\mathcal{R}_k} = (\rho_k^{(i\to j)})^{\mathcal{R}_k}} \ind{(\rho_k^{(i\to j)})^{\mathcal{R}_{k+1}} = \rho_{k+1}^{\mathcal{R}_{k+1}}} & \text{if $j=1,\ldots,J(\rho_k^{(i\to j)})$} \\
        M \Normal(y_{ik}; \mu^*_{kj}, \sigma^{*2}_{kj}) \ind{\rho_{k-1}^{\mathcal{R}_k} = (\rho_k^{(i\to j)})^{\mathcal{R}_k}} \ind{(\rho_k^{(i\to j)})^{\mathcal{R}_{k+1}} = \rho_{k+1}^{\mathcal{R}_{k+1}}} & \text{if $j=J(\rho_k^{(i\to j)})+1$}
    \end{cases},
\end{equation*}
\normalsize
where $\mu^*_{k,J(\rho_k^{(i\to j)})+1}$ and $\sigma^{*2}_{k,J(\rho_k^{(i\to j)})+1}$ are drawn from their prior distributions, $\Normal(\theta_k,\tau^2_k)$ and $\InvGa(a_\sigma,b_\sigma)$, respectively.
We further use the full conditional distribution for the elements in $\bm{\alpha}$, reported in Equation~\ref{SM-eq:full_cond_alpha} (Equation~\ref{eq:full_cond_alpha} in the article), when $d_\gamma=0$, and the full conditional distributions for the elements in $\bm{\alpha}$ and $\bm{\omega}$, reported in Equation~\ref{SM-eq:full_cond_alpha_omega} (Equation~\ref{eq:full_cond_alpha_omega} in the article), when $d_\gamma>0$.
The prior distributions of the remaining parameters are conjugate, hence deriving the full conditional distributions is straightforward. Using standard techniques, we obtain
\begin{align*}
    \phi_0 \mid - &\sim \Normal\left( \bigg(m_0s_0^{-2}+\lambda^{-2}\sum_{k=1}^K\theta_k\bigg) \bigg(s_0^{-2}+K\lambda^{-2}\bigg)^{-1}, \bigg(s_0^{-2}+K\lambda^{-2}\bigg)^{-1} \right), \\
    \lambda^{-2} \mid - &\sim \Ga\left( a_\lambda+\frac{K}{2}, b_\lambda+\frac{1}{2}\sum_{k=1}^K(\theta_k-\phi_0)^2 \right), \\
    \theta_k \mid - &\sim \Normal\left( \bigg(\phi_0\lambda^{-2}+\tau_k^{-2}\sum_{j=1}^{J_k}\mu^*_{kj}\bigg)\bigg(\lambda^{-2}+J_k\tau_k^{-2}\bigg)^{-1}, \bigg(\lambda^{-2}+J_k\tau_k^{-2}\bigg)^{-1} \right), \\
    \tau_k^{-2} \mid - &\sim \Ga\left( a_\tau+\frac{J_k}{2}, b_\tau+\frac{1}{2}\sum_{j=1}^{J_k}(\mu^*_{kj}-\theta_k)^2 \right), \\
    \mu^*_{kj} \mid - &\sim \Normal\left( \bigg(\theta_k\tau_k^{-2}+\sigma_{kj}^{*-2}\sum_{i:c_{ik}=j}y_{ik}\bigg) \bigg(\tau_k^{-2}+|\{i:c_{ik}=j\}|\sigma_{kj}^{*-2}\bigg)^{-1}, \bigg(\tau_k^{-2}+|\{i:c_{ik}=j\}|\sigma_{kj}^{*-2}\bigg)^{-1} \right), \\
    \sigma_{kj}^{*-2} \mid - &\sim \Ga\left( a_\sigma+\frac{|\{i:c_{ik}=j\}|}{2}, b_\sigma+\frac{1}{2}\sum_{i:c_{ik}=j}(y_{ik}-\mu^*_{kj})^2 \right),
\end{align*}
where $\Ga(a,b)$ is a Gamma distribution with shape $a>0$ and rate $b>0$.

\subsection{Functional data model}
The full conditional distribution for $\gamma_{ik}$ is Equation~\ref{SM-eq:full_cond_gamma}  (Equation~\ref{eq:full_cond_gamma} in the article), while the full conditional probability for $c_{ik}$ reported in Equation~\ref{SM-eq:full_cond_cik} (Equation~\ref{eq:full_cond_cik} in the article) assumes the following form

\tiny
\begin{equation*}
    \pr(c_{ik}=j\mid-) \propto
    \begin{cases}
        |S^{(-i)}_{kj}| \prod_{t=1}^{T_i}\Normal(y_i(x_{it}); \bm{b}(x_{it})^\top\bm{\theta}_i, \sigma^2) \ind{\rho_{k-1}^{\mathcal{R}_k} = (\rho_k^{(i\to j)})^{\mathcal{R}_k}} \ind{(\rho_k^{(i\to j)})^{\mathcal{R}_{k+1}} = \rho_{k+1}^{\mathcal{R}_{k+1}}} & \text{if $j=1,\ldots,J(\rho_k^{(i\to j)})$} \\
        M \prod_{t=1}^{T_i}\Normal(y_i(x_{it}); \bm{b}(x_{it})^\top\bm{\theta}_i, \sigma^2) \ind{\rho_{k-1}^{\mathcal{R}_k} = (\rho_k^{(i\to j)})^{\mathcal{R}_k}} \ind{(\rho_k^{(i\to j)})^{\mathcal{R}_{k+1}} = \rho_{k+1}^{\mathcal{R}_{k+1}}} & \text{if $j=J(\rho_k^{(i\to j)})+1$}
    \end{cases}.
\end{equation*}
\normalsize
where the auxiliary variable $\theta^*_{k,J(\rho_k^{(i\to j)})+1}$ is drawn from its prior distribution, $\Normal\left(\phi\theta^{*(\to j)}_{k-1},\tau^2\right)$, where $\theta^{*(\to j)}_{k-1}=|{\cal C}_{k-1}^{(\to j)}|^{-1}\sum_{l\in {\cal C}_{k-1}^{(\to j)}}\theta^*_{k-1,l}$. We further use the full conditional distribution for the elements in $\bm{\alpha}$, reported in Equation~\ref{SM-eq:full_cond_alpha} (Equation~\ref{eq:full_cond_alpha} in the article), when $d_\gamma=0$, and the full conditional distributions for the elements in $\bm{\alpha}$ and $\bm{\omega}$, reported in Equation~\ref{SM-eq:full_cond_alpha_omega} (Equation~\ref{eq:full_cond_alpha_omega} in the article), when $d_\gamma>0$.
The full conditional distributions for the remaining parameters are
\begin{align*}
    \sigma^{-2} \mid - &\sim \Ga\left( a_\sigma+\frac{1}{2}\sum_{i=1}^{n}D_i, b_\sigma+\frac{1}{2}\sum_{i=1}^{n}\sum_{t=1}^{T_i}\left\{y_i(x_{it}) - \bm{b}(x_{it})^\top\bm{\theta}_i\right\}^2 \right), \\
    \tau^{-2} \mid - &\sim \Ga\left( a_\tau + \frac{1}{2}\sum_{k=1}^KJ_k, b_\tau+\frac{1}{2}\bigg[\sum_{j=1}^{J_1}\bigg(\theta^*_{1j}\bigg)^2 + \sum_{k=1}^K\sum_{j=1}^{J_k}\bigg(\theta^*_{kj}-\phi\theta^{*(\to j)}_{k-1}\bigg)^2\bigg] \right), \\
    \phi \mid - &\sim \Normal\left( \E[\phi\mid-], \var(\phi\mid-) \right),
\end{align*}
where
\begin{equation*}
    \begin{split}
        \var(\phi\mid-) &= \bigg(s_0^{-2}+\tau^{-2}\sum_{k=2}^K\sum_{j=1}^{J_k}(\theta^{*(\to j)}_{k-1})^2\bigg)^{-1}, \\
        \E[\phi\mid-] &= \bigg(s_0^{-2}m_0+\tau^{-2}\sum_{k=2}^K\sum_{j=1}^{J_k}\theta^*_{kj}\theta^{*(\to j)}_{k-1}\bigg) \var(\phi\mid-).
    \end{split}
\end{equation*}
For $j=1.\ldots,J_k$, $k=2,\ldots,K-1$, we have that $\theta^*_{kj} \mid - \sim \Normal\left( \E[\theta^*_{kj}\mid-], \var(\theta^*_{kj}\mid-) \right)$ with
\footnotesize
\begin{equation*}
    \begin{split}
        \var(\theta^*_{kj}\mid-) &= \left[\tau^{-2}+\tau^{-2}\phi^2 \sum_{j'\in C_{k+1}^{(j\to)}} |C_{k}^{(\to j')}|^{-2}+\sigma^{-2}\sum_{i:c_{ik}=j}\sum_{t=1}^{T_i} b_k(x_{it})^2\right]^{-1}, \\
        \E[\theta^*_{kj}\mid-] &= \left[ \tau^{-2}\phi \theta^{*(\to j)}_{k-1} + \tau^{-2}\phi\sum_{j'\in C_{k+1}^{(j\to)}}|C_{k}^{(\to j')}|^{-1}\epsilon^{(j)}_{k+1,j'} + \sigma^{-2}\sum_{i:c_{ik}=j}\sum_{t=1}^{T_i}b_k(x_{it})r^{(k)}_{it} \right] \var(\theta^*_{kj}\mid-),
    \end{split}
\end{equation*}
\normalsize
where $\epsilon^{(j)}_{k+1,j'}=\theta^*_{k+1,j'}-\phi|C_{k}^{(\to j')}|^{-1}\sum_{l\in C_{k}^{(\to j')}\setminus\{j\}}\theta^*_{kl}$, and $r_{it}^{(k)}=y_i(x_{it}) - \sum_{k'\neq k}b_{k'}(x_{it})\theta^*_{k',c_{ik'}}$.
The quantity $C_{k+1}^{(j \to)}$ denotes the set $\{j'\in\{1,\ldots,J_{k+1}\}: \sum_{i=1}^n\ind{c_{i,k}=j, c_{i,k+1}=j'}>0\}$ containing the $|C_{k+1}^{(j \to)}|$ indices of clusters to which cluster $j$ is reallocated when moving from index $k$ to $k+1$.

When $k=1$, $\pr(\theta_{kj}^*\mid\bm{\theta}^*_{k-1},\phi,\tau^2)$ is the pdf of a Gaussian distribution with mean 0, instead of $\phi\theta^{*(\to j)}_{0}$, since $\bm{\theta}^*_{0}$ does not exist; hence, for $j=1,\ldots,J_1$, the mean and variance of the full conditional distribution become
\begin{equation*}
    \begin{split}
        \var(\theta^*_{1j}\mid-) &= \left[\tau^{-2}+\tau^{-2}\phi^2 \sum_{j'\in C_{2}^{(j\to)}} |C_{1}^{(\to j')}|^{-2}+\sigma^{-2}\sum_{i:c_{i1}=j}\sum_{t=1}^{T_i}b_1(x_{it})^2\right]^{-1}, \\
        \E[\theta^*_{1j}\mid-] &= \left[ \tau^{-2}\phi \sum_{j'\in C_{2}^{(j\to)}}|C_{1}^{(\to j')}|^{-1}\epsilon^{(j)}_{2,j'} + \sigma^{-2}\sum_{i:c_{i1}=j}\sum_{t=1}^{T_i}b_1(x_{it})r^{(1)}_{it} \right] \var(\theta^*_{1j}\mid-).
    \end{split}
\end{equation*}
When $k=K$, there are not the sums in $j'$ since $\bm{\theta}^*_{K+1}$ does not exist; hence, for $j=1,\ldots,J_K$, the mean and variance of the full conditional distribution become
\begin{equation*}
    \begin{split}
        \var(\theta^*_{Kj}\mid-) &= \left[\tau^{-2}+\sigma^{-2}\sum_{i:c_{ik}=j}\sum_{t=1}^{T_i}b_K(x_{it})^2\right]^{-1}, \\
        \E[\theta^*_{Kj}\mid-] &= \left[ \tau^{-2}\phi\theta^{*(\to j)}_{K-1} + \sigma^{-2}\sum_{i:c_{iK}=j}\sum_{t=1}^{T_i}b_K(x_{it})r^{(K)}_{it} \right] \var(\theta^*_{Kj}\mid-).
    \end{split}
\end{equation*}

\subsection{Posterior estimates}
We interested in computing posterior point estimates for the sequence of partitions and for the cluster-specific parameters, that is $(\mu_{kj},\sigma^2_{kj})$ in Model~\ref{SM-eq:model_nonfunctional} and $\theta^*_{kj}$ in Model~\ref{SM-eq:model_functional}.
A posterior point estimate for the sequence of partitions $\rho_1, \ldots, \rho_K$ is obtained by independently estimating each partition $\rho_k$ at every index $k$. Each of these is estimated using the SALSO algorithm \citep{dahl_SearchAlgorithmsLoss_2022} with the Binder loss \citep{binder_BayesianClusterAnalysis_1978}. Posterior estimates of the cluster-specific parameters are computed conditionally on the posterior point estimate of $\rho_1, \ldots, \rho_K$. 
Let $\hat{\mu}^*_{kj}$, $\hat{\sigma}^{*2}_{kj}$, $\hat{\theta}^*_{kj}$ be the posterior point estimates of the cluster-specific parameters $\mu^*_{kj}$, $\sigma^{*2}_{kj}$, $\theta^*_{kj}$, then
\begin{align*}
    \mu_{kj}^* &= \frac{1}{|\mathcal{B}|}\sum_{b\in\mathcal{B}} \left\{ \frac{1}{|\hat{S}_{kj}|} \sum_{i\in\hat{S}_{kj}} \mu^{*(b)}_{k,c^{(b)}_{ik}} \right\}, \\
    \sigma_{kj}^* &= \frac{1}{|\mathcal{B}|}\sum_{b\in\mathcal{B}} \left\{ \frac{1}{|\hat{S}_{kj}|} \sum_{i\in\hat{S}_{kj}} \sigma^{*(b)}_{k,c^{(b)}_{ik}} \right\}, \\
    \theta_{kj}^* &= \frac{1}{|\mathcal{B}|}\sum_{b\in\mathcal{B}} \left\{ \frac{1}{|\hat{S}_{kj}|} \sum_{i\in\hat{S}_{kj}} \theta^{*(b)}_{k,c^{(b)}_{ik}} \right\},
\end{align*}
for $k=1,\ldots,K$, $j=1,\ldots,\hat{J}_{k}$, with $\hat{J}_{k}$ being the number of clusters in the posterior point estimate of $\rho_k$, $\hat{\rho}_k=\{\hat{S}_{k1},\ldots,\hat{S}_{k\hat{J}_k}\}$; the apex $(b)$ denotes the value assumed by the quantity at the $b$th MCMC sample, and $\mathcal{B}$ is a set of $|\mathcal{B}|$ MCMC samples.

\section{Simulation study}
See \texttt{simulations\_summary.Rmd} or its respective rendered file, \texttt{simulations\_summary.html}.

\section{Venetian lagoon tide level data analysis}
See \texttt{venice.Rmd} or its respective rendered file, \texttt{venice.html}.

\subsubsection*{Note on hyperparameter and B-splines selection}

Following the Goldilocks principle, we tried several combinations of prior parameters and B-splines functions, and noticed that, at least in this application, the two key quantities that highly influence the number of clusters are the prior parameters $(a_\sigma,b_\sigma)$ and the number of knots of the B-splines; the remaining prior parameters can be left at their uninformative defaults reported above.
In general, we suggest focusing on $(a_\sigma,b_\sigma)=(1,1)$, considering $(a_\sigma,b_\sigma)=(1,1)$ as a starting point since it delivers good results regardless of the other parameter choices. 
As for the number of knots, the choice highly depends on the functional observations and its selection should be guided by application-specific considerations.
The number of knots, or equivalently the number of basis functions $K$, determines the size of the subintervals $\mathcal{D}_k$ of the functional domain $\mathcal{D}$ and therefore sets the minimum scale at which two (or more curves) can overlap.
In particular, larger values induce a finer partition of the domain, allowing for more local clustering patterns, whereas smaller values result in coarser partitions of $\mathcal{D}$, favoring more global behaviors.
From a practical perspective, this suggests choosing the number of knots, or $K$, to match the level of resolution at which meaningful features are expected in the data.

\end{document}